\documentclass[journal,10pt,draftclsnofoot,onecolumn,twoside]{IEEEtran}
\usepackage{ifpdf}
\usepackage{setspace}
\usepackage{color}
\usepackage{units}

% *** MISC UTILITY PACKAGES ***
%
\usepackage{ifpdf}
% *** GRAPHICS RELATED PACKAGES ***
\ifCLASSINFOpdf
  \usepackage[pdftex]{graphicx}
  %\graphicspath{{../figures/}}
  \DeclareGraphicsExtensions{.pdf,.jpeg,.png}
\else
  \usepackage[dvips]{graphicx}
  %\graphicspath{{../figures/}}
  \DeclareGraphicsExtensions{.eps}
\fi

\ifCLASSOPTIONcaptionsoff
  \usepackage[nomarkers]{endfloat}
 \let\MYoriglatexcaption\caption
 \renewcommand{\caption}[2][\relax]{\MYoriglatexcaption[#2]{#2}}
\fi

%
% *** PACKAGES ***
%
\usepackage{cite}
\usepackage[cmex10]{amsmath}
\usepackage{amsthm}
\usepackage{amsfonts}
\usepackage{mathtools}
\usepackage{ctable}
\usepackage[para]{threeparttable}
\mathtoolsset{showonlyrefs}
\usepackage{amssymb}
\usepackage{bbm}
\usepackage{dsfont}
\usepackage{amscd}
\usepackage{dsfont}
\usepackage{xcolor}
\definecolor{darkgreen}{rgb}{0,0.75,0}
\usepackage{multirow,array}
\usepackage{thinsp}
\usepackage{stfloats}
\usepackage{units}
\usepackage[ddmmyyyy]{datetime}

\usepackage{hyperref}
\hypersetup{
  pdfauthor = {Mohammad M. Mansour},
  pdfcreator = {Mohammad M. Mansour},
  pdfnewwindow,
  pdffitwindow=true,
  %pdfstartview={Fit},
  %pdfpagemode=FullScreen, %SinglePage,
  colorlinks=true,
  linkcolor=blue,       % color of internal links (change box color with linkbordercolor)
  citecolor=darkgreen,      % color of links to bibliography
  filecolor=red,        % color of file links
  urlcolor=cyan         % color of external links
}

\makeatletter
\def\imod#1{\allowbreak\mkern10mu({\operator@font mod}\,\,#1)}
\makeatother

\hyphenation{op-tical net-works semi-conduc-tor}
\newtheorem{theorem}{Theorem}
%[subsection]
\newtheorem{lemma}{Lemma}

\newcommand{\nth}[1]{{#1}{\text{th}}}

\usepackage{epstopdf}

\begin{document}
%paper title
% can use linebreaks \\ within to get better formatting as desired
\title{Inter-Frame Coding For Broadcast Communication}

\author{Hady~Zeineddine and Mohammad~M.~Mansour,~\IEEEmembership{Senior Member,~IEEE}
\thanks{H.~Zeineddine and M. M. Mansour are with the Department
of Electrical and Computer Engineering, American University of Beirut, Beirut,
Lebanon, e-mail: hma41@aub.edu.lb, mmansour@ieee.org.}}%

% The paper headers
\markboth{J. Sel. Areas Commun. -- Recent Advances In Capacity Approaching Codes (to appear 2016)}
{Zeineddine and Mansour: Inter-Frame Coding For Broadcast Communication}

\maketitle

\vspace{-0.6in}
\begin{abstract}
A novel inter-frame coding approach to the problem of varying channel-state conditions in broadcast wireless communication is developed in this paper; this problem causes the appropriate code-rate to vary across different transmitted frames and different receivers as well. The main aspect of the proposed approach is that it incorporates an iterative rate-matching process into the decoding of the received set of frames, such that: throughout inter-frame decoding, the code-rate of each frame is progressively lowered to or below the appropriate value, prior to applying or re-applying conventional physical-layer channel decoding on it. This iterative rate-matching process is asymptotically analyzed in this paper. It is shown to be optimal, in the sense defined in the paper. Consequently, the data-rates achievable by the proposed scheme are derived. Overall, it is concluded that, compared to the existing solutions, inter-frame coding presents a better \emph{complexity} versus \emph{data-rate} tradeoff. In terms of complexity, the overhead of inter-frame decoding includes operations that are similar in type and scheduling to those employed in the relatively-simple iterative erasure decoding. In terms of data-rates, compared to the state-of-the-art two-stage scheme involving both error-correcting and erasure coding, inter-frame coding increases the data-rate by a factor that reaches up to $1.55\times$.
\end{abstract}
\section{Introduction}
\noindent A defining characteristic of a mobile wireless channel is the variation of the channel strength over time and frequency~\cite{FundamentalsOfWirelessCommunication}. This paper considers one implication of this channel variability in a communication scenario in which a sequence of data bits is communicated between a sender and multiple receivers. The data sequence is partitioned into a number of blocks; each block is encoded separately, using a physical- (PHY-) layer encoder, into a frame that is transmitted over the channel. The partitioning step is required because the length of the data-sequence, that is the number of data bits, is typically much larger than the maximum block-length supported by the PHY-layer encoder/decoder. As a result of the channel variability, the channel-state varies across the different transmitted frames. Subsequently, the appropriate PHY-layer code-rate, matched to the corresponding instantaneous channel-state, varies across the different frames: this is the implication dealt with in this paper. Matching the instantaneous channel-state to the appropriate code-rate, called channel-to-rate matching in this paper, is crucial to achieve high communication data-rates: if the code-rate is set to a value higher than the appropriate rate, decoding will fail at the receiver side; setting the code-rate to a value lower than the appropriate rate implies unnecessary redundancy bits are transmitted, leading to power inefficiency and data-rate loss.

In this paper, a product-coding approach is proposed to the channel-to-rate matching problem in the broadcast communication scenario. The main advantage of the proposed approach is that compared to the state-of-the-art solutions, it achieves a better \emph{complexity} versus \emph{communication data-rates} tradeoff, in the sense that is detailed in this paper. As a prelude, the state-of-the-art solutions are described next.

\subsection{Existing Solutions}
\noindent In unicast communication where the data sequence is transmitted from one sender to a single receiver, channel-to-rate matching is done frame-wise through a feedback based-scheme, e.g. as in LTE~\cite{LTE_Book}. In this scheme, instantaneous channel-state information (CSI) on the sender side is updated regularly upon feedback from the receiver, and the appropriate code-rate is set accordingly prior to transmission. When obtaining instantaneous CSI is costly or infeasible due to fast/abrupt channel-state variations, hybrid automatic repeat request (HARQ) techniques~\cite{Chase,IR} are used.
In HARQ, an encoding frame is transmitted, and then an extra retransmission is invoked each time the feedback from the receiver indicates a decoding failure. In incremental-redundancy (IR) HARQ~\cite{IR}, a retransmission
involves an ``increment" of extra redundancy bits, not transmitted previously and which corresponds to the frame. At the receiver side, the log-likelihood ratios (LLRs) of the increment are concatenated to the LLRs of the frame, and decoding is retried. Effectively, IR-HARQ increases the frame-length or equivalently decreases the corresponding code-rate upon each retransmission. The code-rate is thus decreased progressively to the appropriate rate-value where decoding succeeds. HARQ has been widely adopted in wireless communication standards (e.g. IEEE 802.16e/WiMAX\cite{802.16} and 3GPP-LTE\cite{LTE}).

In the broadcast communication scenario where the data sequence is sent from one sender to multiple receivers, the channel-to-rate matching problem has to be reconsidered because the aforementioned frame-wise feedback-based channel-to-rate matching schemes do not scale well as the number of receivers grows. The reason is that channel-state variation is two-folded: 1) per single
receiver, the channel-state varies across different frame-transmissions and 2) per single frame-transmission, the channel-state varies across different receivers. This means that for each frame, the code-rate must be matched to the worst instantaneous channel-state instance among all receivers. In the typical case where the worst channel-state instance does not correspond to the same receiver for all the transmitted frames, the following happens: for each receiver, there exist some transmitted frames which code-rates are matched to channel-state instances that are worse than the receiver's channel-state instances, and therefore, these code-rates are lower than the appropriate values corresponding to the instantaneous channel-state of this receiver. As a result, the average number of transmitted bits per frame will be larger than the average number of bits required by each receiver for successful recovery of the transmitted frames.

The state-of-the-art solution to the problem of channel-state variation in broadcast communication
is a two-stage forward-error-control scheme that combines application-layer (APP-layer) erasure coding with PHY-layer channel coding. The data sequence is encoded, using an erasure code, to form a larger sequence. The latter sequence is partitioned into blocks, each of which is subject to PHY-layer encoding forming a frame that is transmitted over the channel. On the receiver side, the frames that fail PHY-layer decoding are discarded, whereas symbols in the successfully-decoded frames are forwarded to erasure decoding. Erasure decoding is then used to recover the symbols erased due to channel decoding failures. The two-stage scheme does not solve the channel-to-rate matching problem for every frame; rather, it ensures communication is done reliably even under the PHY-layer decoding failures occurring when the code-rate is higher than the appropriate value that matches the corresponding instantaneous channel-state.

Due to their linear-time encoding and decoding and excellent coding performance, LT/Raptor codes\cite{Luby_LT,Shokrollahi_Raptor} or an enhancement of them (RaptorQ\cite{RaptorQ}),  are typically deployed as erasure codes.
Beside their rateless encoding procedure, a key factor in their choice as the deployed erasure codes is their inactivation decoding method~\cite{Inactivation}, which combines the low complexity of the belief-propagation method with the decoding guarantee of Gaussian elimination. The two-stage scheme, involving Raptor coding, has been included in the 3GPP multimedia broadcast/multicast services (MBMS)\cite{MBMS} and digital video broadcasting-handheld (DVB-H)\cite{DVB} standards.
\textcolor{black}{Other codes such as the binary deterministic rateless (BDR) codes have been
proposed to replace LT/Raptor codes as erasure codes in broadcast communication~\cite{BDR}. Variations of the two-stage scheme, based on network coding, has also been proposed (e.g.~\cite{NetworkCoding}).}

To understand the significance of the inter-frame coding approach proposed in this paper, the advantages and drawbacks of the two-stage scheme are stated briefly next. First, the simplicity of the erasure-channel model implies that the corresponding decoding algorithms are significantly simpler than the PHY-layer decoding algorithms. Thus, erasure decoding can be efficiently implemented in software, making it flexible and adaptable to the type of application involved; yet, hardware accelerators are proposed to enhance the power and decoding throughput~\cite{Embedded}. Second, due to this combination of flexibility and relative-simplicity,  the erasure-code size is made much larger than the PHY-layer frame-length so that the erasure code spans multiple PHY-layer frames. Therefore, erasure coding is used to withstand any type of channel-state variation that leads to a relatively high PHY-layer decoding-failure rate. Third, the scheme is suitable for the broadcast communication scenario since it scales well as the number of receivers goes up.

The main drawback of the two-stage scheme is that it incurs a loss in the achievable data-rates. This is due to the underlying erasure-channel abstraction it involves: a frame which PHY-layer decoding fails is discarded, i.e. is effectively erased. This is to be contrasted with the case in IR-HARQ where, in case of decoding failure, increments of redundancy bits are sent from the transmitter and decoding is retried. This deficiency is addressed in~\cite{Solution1,Solution2,Solution3} by applying post-decoding processing on the frames which decoding fails. However, these schemes incur a high complexity overhead~\cite{Solution1} and/or impose certain assumptions on the bit-error-rate in the unsuccessfully-decoded frames, thus, limiting their applicability~\cite{Solution2,Solution3}.

\subsection{Proposed Approach}
\noindent In this paper, an approach is developed to incorporate the channel-to-rate matching step into the decoding process; this approach is called here inter-frame coding. The inter-frame encoding step generates both frames and subframes, where a subframe-length is only a small portion of the frame-length. The frames are formed using the conventional PHY-layer encoding, called here intra-frame encoding, and the subframes are formed by applying linear encoding on increments corresponding to the respective frames. The inter-frame decoding is an iterative process in which two kinds of procedures are applied: 1) the conventional PHY-layer channel decoding, called here intra-frame decoding, to recover the transmitted frames, and 2) progressive concatenation of subframes to frames that are unsuccessfully-decoded prior to retrying intra-frame decoding on them. Inter-frame decoding is then an iterative process in which unsuccessfully decoded frames are recovered by progressively decreasing their code-rates and retrying channel decoding on them; therefore, it applies a form of rate matching for each of the involved frames.

Inter-frame coding can be viewed as a variation or a subclass of the product codes introduced by Elias~\cite{ProductElias}, particularly related to the interleaved~\cite{ProductInterleaved} and irregular~\cite{ProductIrregular} subclasses of product codes. In this regard, the PHY-layer intra-frame encoding (frame-generation) can be viewed as row coding, whereas subframe generation can be viewed as a form of column coding. Overall, inter-frame coding is a product-coding approach, with peculiar encoding and decoding algorithms that are developed so that they attain certain features that make inter-frame coding advantageous compared to the previously mentioned existing solutions for the broadcast communication scenario. This is described in the next two paragraphs.

In terms of complexity, the significant feature of inter-frame decoding can be stated as follows: besides intra-frame decoding, inter-frame decoding involves nearly the same scheduling and operations of the iterative LT erasure decoding process in~\cite{Luby_LT}. This means that the previously mentioned advantages attributed to erasure decoding in the two-stage error-control scheme can also be maintained in inter-frame decoding: this includes decoding simplicity, flexibility, and ability to handle large number of frames, as well as the possibility of a multi-layer (i.e. PHY- and APP-) implementation of the proposed scheme.

In terms of coding performance, inter-frame coding can achieve better data-rates compared to the two-stage scheme: this is because an unsuccessfully-decoded frame is not discarded in inter-frame decoding; rather, its code-rate is decreased and intra-frame decoding is reapplied on it. Besides, being a coding approach, it scales well as the number of receivers grows making it suitable for the broadcast scenario.

The contribution of this paper includes algorithm development and code analysis. It can be divided into three parts.
First, the encoding and decoding algorithms are developed, and their features discussed.
Second, the rate-matching process involved in inter-frame decoding is analyzed. As a prelude, the channel-state variation in relation to inter-frame decoding is modeled using a channel-characterizing probability distribution over $\mathbb{Z^+}$.
Under such model, inter-frame decoding can be viewed as an iterative process in which rate matching is applied progressively. This model allows to set an optimality criterion for this rate-matching process, as well as an upper-bound on the data-rates that are achievable by inter-frame decoding. Then, this process is shown to be equivalent to a two-phase message-passing algorithm that is applied on a bi-partite graph, and which generalizes the LDPC erasure iterative message-passing decoding. Under some simplifying assumption on the channel-characterizing probability distribution, the two-phase message-passing procedure is asymptotically analyzed, and optimal degree distributions are constructed accordingly; these distributions characterize inter-frame codes that can achieve the previously mentioned upper-bound on the data-rates.
Third, the data-rates, shown to be achievable by inter-frame coding, are then compared to their counterparts in the state-of-the-art solutions. It is shown that inter-frame coding increases the data-rate by a factor that is dependant on the parameters of the corresponding channel-characterizing distribution; Compared to frame-wise feedback-based rate-matching processes, having a target frame-error-rate (FER) of $10^{-3}$,  this factor can reach $5\times$ for an infinitely large number of receivers. Compared to the state-of-the-art two-stage scheme, this factor can reach up to $\sim1.55\times$.

The rest of the paper is organized as follows. In Section~\ref{s:Setup}, the problem setup is described. The proposed encoding and decoding algorithms are detailed in Section~\ref{s:Algorithms}. In Section~\ref{s:Code-Analysis}, the rate-matching process is analyzed and the data-rates achievable by inter-frame coding are derived. These data-rates are compared to those achievable by the other schemes in Section~\ref{s:Comparison}. Section~\ref{s:Conclusions} concludes the paper.

\section{Problem Setup}\label{s:Setup}
\noindent The problem setup is made general enough to fit both the proposed inter-frame coding scheme and the conventional two-stage scheme. This simplifies the comparison done in Section~\ref{s:Comparison}.

Figure~\ref{f:Problem_Setup} illustrates the problem setup, on the sender side. A sequence of $N_F\!\cdot\! K$ data bits is to be sent to a number of receivers. In case of the two-stage scheme, the $N_F\!\cdot\!K$ date bits are encoded using an APP-layer erasure encoding procedure with rate $R_E=\frac{N_F}{N_T}$. In case of inter-frame coding, no APP-layer erasure encoding is applied, and $R_E$ can be thought to be equivalently $1$ and $N_T=N_F$. The resulting $(N_T\cdot K)$-bit sequence is partitioned into $N_T$ blocks. Each block consists of $K$-bits or equivalently $L$ symbols, where each symbol includes $\frac{K}{L}$ bits.
A block is forwarded to the PHY-layer channel encoder that generates a rate-$R_L$ encoded frame, where $R_L\!=\!\frac{K}{N+D\cdot\Delta}$ and $D,\Delta,N\!\in\!\mathbb{N}$ are parameters to be defined next.
This encoder, called ``intra-frame" in this paper, is designed to be rate-compatible (as in e.g.~\cite{RC_Turbo2,RC_Protograph,HWRaptor,rate_compatible_mansour}), where for any two code-rates $R\!>\!R'$, the rate-$R'$ frame is a concatenation of the rate-$R$ frame and extra redundancy bits. The rate-$R_L$ encoded frame can be viewed as a concatenation of a rate-$R_H$ $N$-bit encoded frame, where $R_H\!=\!\frac{K}{N}$, and $D$ vectors or increments of size $\Delta$ bits each. The $\nth{j}$ increment, is denoted by $\Delta(f,j)$, where $1\!\leq\! j\!\leq\! D$ and $f$ is the frame index. The sequence of the $N_T$ $(N\!+\!D\cdot\Delta)$-bit frames is then forwarded to the bit-stream generation process that produces the bit stream to be transmitted over the channel.
\begin{figure*}[t]
\centering
\includegraphics[scale=0.58]{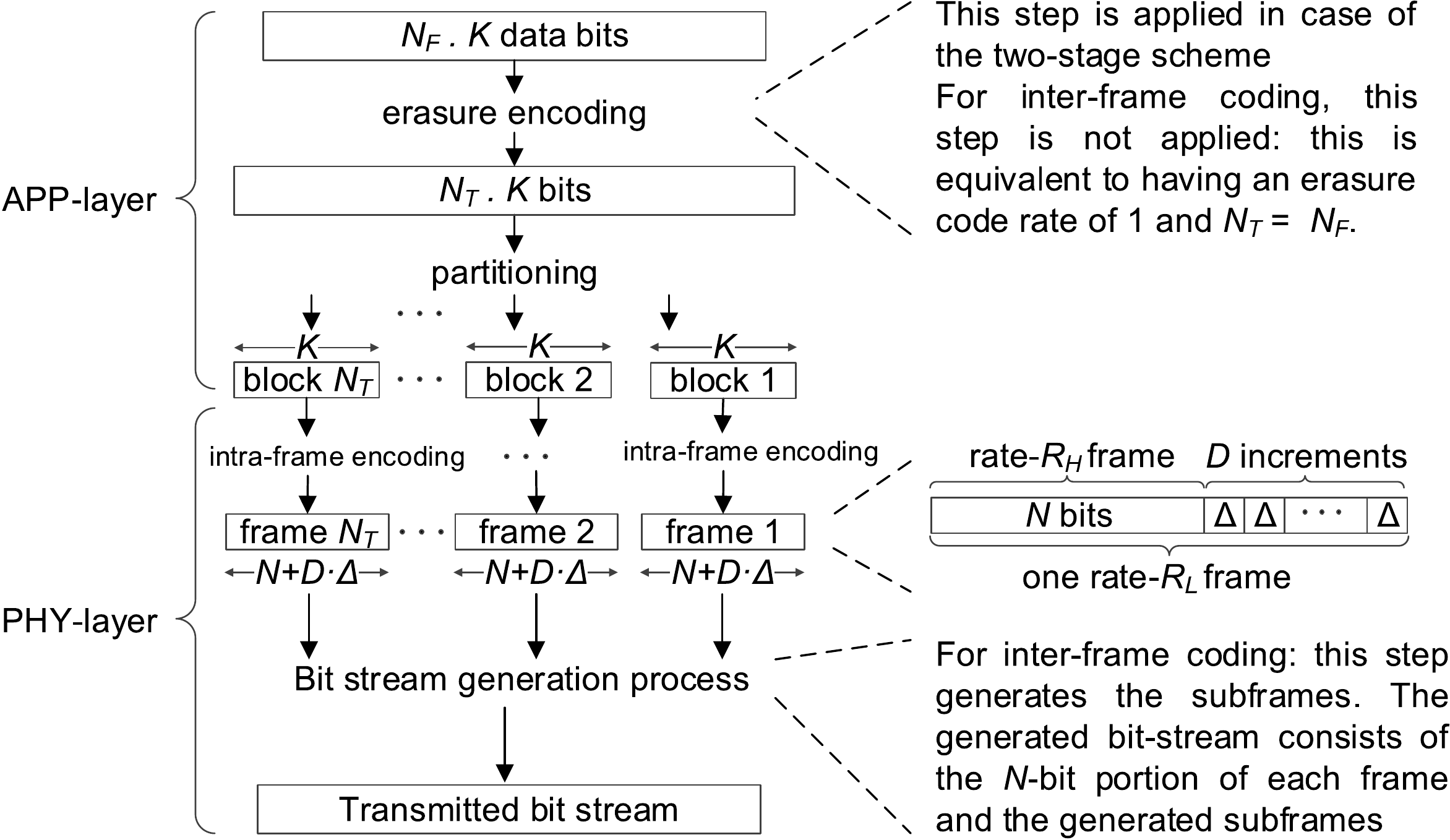}
\caption{Encoding procedure setup}
\label{f:Problem_Setup}
\end{figure*}

At the receiver side, the log-likelihood ratios (LLRs) of the transmitted bits are input to a recovery process to restore the $N_T$ blocks. This process is the focus of this paper. The following related terminology is defined. \textcolor{black}{The frame error-rate (FER) of the recovery process is the probability that the process fails to recover a randomly chosen frame of the $N_F$ frames.}
An $N$-LLR vector corresponding to the first $N$-bit portion of a transmitted frame $f$, $f\leq N_F$, is denoted by $\Lambda_N(f)$, whereas a $\Delta$-LLR vector corresponding to $\Delta(f,j)$ is denoted $\Lambda_\Delta(f,j)$. In the paper, unless otherwise indicated, indexing is done by including the corresponding index/indices in brackets $(\cdot)$.

\textcolor{black}{
A note should be made on the partitioning of the bit sequence into $K$-bit blocks in the setup.
A plausible solution to the cross-frame channel-state variation is to encode the
$N_F\cdot K$ data bits into one frame at the PHY-layer and transmit it over the channel. The motivation is that the transmission of this single large frame will span a relatively long time; therefore, the optimal code-rate of this frame can be deduced
from statistical channel-state information (CSI), as is suggested by the basic communication theory of Shannon. This solution however is impractical because the intra-frame decoder, usually implemented in hardware as an ASIC (Application-Specific
Integrated-Circuits) solution for power and throughput reasons, has to support
now much larger frame-lengths as the information block-length is multiplied by
a factor of $N_F$ (from $K$ to $N_F \cdot K$). This is clearly impractical in virtually any
hardware-based system. In contrast, the proposed inter-frame decoding scheme has two components: 1) an intra-frame decoding process which corresponding decoder needs to support a maximum frame-length of $N+D\cdot\Delta$ bits only, and 2) an iterative process that is similar to the simple LT erasure iterative decoding. The relative simplicity of this latter process means that it can be performed mostly using hardware resources that are simple and that need not be dedicated to inter-frame decoding. The aforementioned hardware resources include both the computational units and the memory needed to store the LLRs of the unsuccessfully decoded frames and their neighbor subframes. This point will be clarified in the description of inter-frame coding in Section~\ref{s:Algorithms}.
}

\section{Encoding and Decoding Algorithms}\label{s:Algorithms}
\noindent The inter-frame encoding and decoding procedures are described next.

\subsection{Encoding}\label{s:EncProc}

\noindent Starting from an initial set of $N_T\!=\!N_F$ $K$-bit data blocks, inter-frame encoding generates the following: $N_F$
$N$-bit frames and $K_S$ $\Delta$-bit subframes, forming a total of $N_F\times N +K_S\times\Delta$ bits that are transmitted over the channel. The encoding procedure is illustrated in Fig.~\ref{f:Encoding}. Each block is intra-frame encoded into a $(N\!+\!D\cdot\Delta)$-bit frame. Generation of the $N_F$ frames, in inter-frame encoding, is straightforward: the first $N$-bit portion of each $(N\!+\!D\cdot\Delta)$-bit frame is transmitted over the channel.

The generation of the $K_S$ subframes is done according to a $K_S\!\times\! N_F$  matrix $\mathbf{H}\!=\![h_{(i,j)}]$, where $h_{(i,j)}\in\mathbb{N}$ denotes the entry in row $i$ and column $j$ in $\mathbf{H}$; $\mathbf{H}$ is the ``inter-frame" code generator matrix. Each column of $\mathbf{H}$ has a maximum of $D$ nonzero distinct positive integer entries that are less than $D+1$. The generation of subframe $s$, $s\!\leq\! K_S$, is described fully by row $s$ of $\mathbf{H}$; subframe $s$ is then formed to be equal to:
\begin{equation*}
\bigoplus_{1\leq f\leq N_F}\Delta(f,h_{(s,f)})
\end{equation*}
where $\bigoplus$ denotes vector addition $\bmod 2$ applied on $\Delta$-dimensional binary vectors, and $\Delta(f,0)$ denotes the $\Delta$-dimensional zero vector, for every $f$. Therefore, subframe $s$ is formed using the following procedure:\\
\emph{Initialize subframe $s$ to a $\Delta$-bit zero vector. For each nonzero entry $h_{(s,f)}\!=\!a$ of $\mathbf{H}$, update subframe $s$ to the output vector formed by bit-wise XORing subframe $s$ with increment $\Delta(f,a)$.
}

\begin{figure*}[t]
\centering
\includegraphics[width=\textwidth]{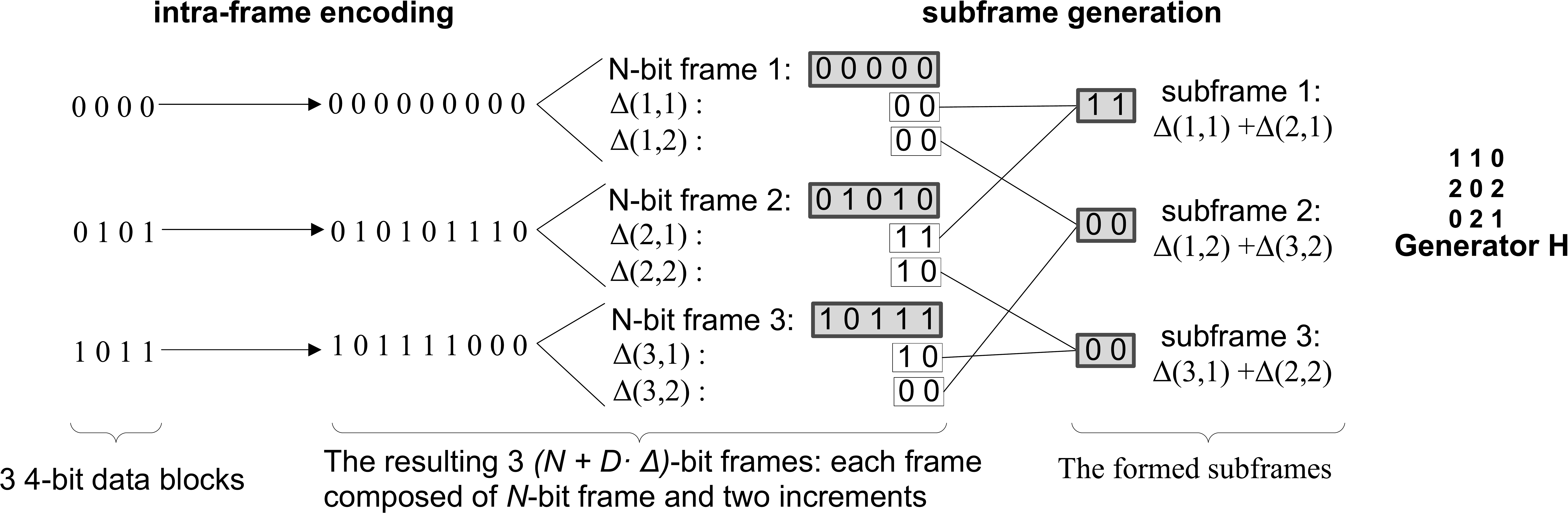}
\caption[Inter-frame encoding example]{Inter-frame encoding example: Generating $K_S\!=\!3$ subframes from $N_F\!=\!3$ frames, with parameters $(K,N,\Delta,D)\!=\!(4,5,2,2)$. The $21$ bits contained in dark-colored boxes in the figure are transmitted over the channel.}
\label{f:Encoding}
\end{figure*}

Some related terminology is defined next: frame $f$ and subframe $s$ are said to be neighbors if $h_{(s,f)}\!\neq\!0$. The degree of frame $f$ is defined as the number of non-zero entries in column $f$ of $\mathbf{H}$. The degree of subframe $s$ is defined as the number of non-zero entries in row $s$ of $\mathbf{H}$.

\subsection{Decoding}\label{s:DecProc}

\noindent {\bf Input:} It consists of the inter-frame code generator matrix $\mathbf{H}$, the intrinsic LLRs of the $N_F$ $N$-bit frames $\Lambda_N(f)$, for $f\!\leq\! N_F$, and the LLRs of the $K_S$ $\Delta$-bit subframes $\Lambda_S(s)$, for $s\!\leq\! K_S$.

\noindent {\bf Output:} It includes an $N_F$-bit flag vector indicating decoding success or failure of the $N_F$ frames. For each successfully-decoded frame, the corresponding $K$ data bits are recovered.

\noindent {\bf Procedure} The decoding procedure applies iteratively three main steps described next:
\begin{itemize}
\item [.] \emph{Intra-frame decoding:} the PHY-layer channel decoding operation applied on each frame.
\item [.] \emph{Subframe LLR update:} when intra-frame decoding of a frame $f$ succeeds, the corresponding increments $\Delta(f,i)$, $i\!=\!1,\cdots,D$, are recovered. Consider any subframe $s$ that is a neighbor of frame $f$, i.e. $h_{(s,f)}\!=\!a\neq0$: subframe $s$ is equal to $\Delta(f,a)\oplus\Psi$ where $\Psi$ is a XOR of other increments.
    $\Lambda_S(s)$ is updated now so that it corresponds to $\Psi$, as follows:
    \begin{equation}
\Lambda_S(s)\leftarrow (-1)^{\Delta(f,a)}\times \Lambda_S(s)
\end{equation}
where the $\nth{m}$ entry of  $(-1)^{\Delta(f,a)}$ is $-1$ if the $\nth{m}$ bit of $\Delta(f,a)$ is $1$, and $1$ otherwise, and $\times$ is entry-wise multiplication. Effectively, this is equivalent to removing $\Delta(f,a)$ from the list of inputs to the XOR operation forming the transmitted subframe $s$.
\item [.] \emph{Subframe-to-frame concatenation:} This step is applied when the number of inputs to a subframe $s$ is effectively reduced to one increment, this reduction being due to successive application of the previous \emph{subframe LLR update} step. Assume this single increment is $\Delta(f',b)$ for some $b\leq D, f'\leq N_F$, then $\Lambda_S(s)$ corresponds now to $\Delta(f',b)$. $\Lambda_S(s)$ is then concatenated to the LLR vector of frame $f'$, decreasing its code-rate. If decoding of frame $f'$ has failed prior to this concatenation, frame $f'$ is rescheduled for intra-frame decoding.
\end{itemize}

The last two of these three steps contribute to the recovery from possible failures in intra-frame decoding of some frames, as is clarified next. Suppose intra-frame decoding of a frame $f$ fails. As other frames are successfully decoded throughout the inter-frame decoding process, there will expectedly appear a subframe $s$ that is neighbor of the frame $f$ such that: all the neighbor frames of $s$, except $f$, are successfully-decoded and therefore recovered. The LLR-vector corresponding to subframe $s$, $\Lambda_S(s)$, is concatenated to the LLR vector corresponding to $f$. This means that the frame-length of $f$ is effectively increased, or equivalently that the code-rate corresponding to frame $f$ is decreased. Decoding is then retried whenever such concatenation happens. When a sufficient number of subframes is concatenated to $f$, that is when the code-rate corresponding to $f$ becomes low enough, the intra-frame decoding of frame $f$ will succeed, and frame $f$ will be recovered. This illustrates the view of inter-frame decoding as an iterative process that applies a form of rate matching. The detailed decoding algorithm is presented next. For simplicity of exposition, it is assumed each row of $\mathbf{H}$ has at least two non-zero entries.

{\bf Algorithm:} Three disjoint sets $\mathcal{R}$, $\mathcal{U}$, $\mathcal{P}$ are formed, where $\mathcal{R}$ includes the indices of the successfully-decoded frames, $\mathcal{U}$ the indices of the unsuccessfully-decoded frames, and $\mathcal{P}$ the indices of the frames scheduled for intra-frame decoding. Both $\mathcal{U}$ and $\mathcal{R}$ are initialized to $\varnothing$ and $\mathcal{P}$ to $\{1,\cdots,N_F\}$. This decoding procedures repeats the following iteration as long as $\mathcal{P}\!\neq\!\varnothing$:\\
\textbf{Iteration:} If $\mathcal{P}\!=\!\varnothing$, quit decoding, else, pick randomly an element $f$ from $\mathcal{P}$. Apply intra-frame decoding on the vector of LLR values corresponding to frame $f$. If intra-frame decoding is unsuccessful, move $f$ to $\mathcal{U}$. If it is successful, proceed as follows:
\begin{itemize}
\item[1.] $f$ is moved from $\mathcal{P}$ to $\mathcal{R}$.
\item[2.] \emph{Increment Recovery:} the $K$ information bits corresponding to frame $f$ are recovered and used through intra-frame encoding to obtain the increments $\Delta(f,k), k=1,\cdots,D$.
\item[3.] For every $s$ where $h_{(s,f)}=a\neq 0$:
 \begin{itemize}
 \item[a.] \emph{Subframe LLR update:} the LLR vector, $\Lambda_S(s)$, is updated: $\Lambda_S(s)\leftarrow \Lambda_S(s)\times (-1)^{\Delta(f,a)}$.
  \item[b.] Set $h_{(s,f)}$ to $0$.
  \item[c.] \emph{Subframe-to-frame concatenation:} If the number of nonzero entries of row $s$ of $\mathbf{H}$ reduces to $1$, where $h_{(s,f')}=b>0$, then: 1) if $f'\in \mathcal{U}$, $f'$ is moved to $\mathcal{P}$, that is, scheduled for intra-frame decoding retrial, and 2) $\Lambda_S(s)$ is concatenated to the LLR vector of $f'$. By abuse on notation, increment $\Delta(f',b)$, or equivalently subframe $s$, is said to be concatenated to frame $f'$.
 \end{itemize}
\end{itemize}

\subsection{Resemblance to LT erasure decoding}
\noindent The resemblance of inter-frame decoding to the LT iterative erasure decoding is due to the algorithm feature that only frames that are successfully-decoded are involved in the process of decreasing the code-rate, or equivalently incrementing the LLR-vector, of the unsuccessfully-decoded frames. Note that such feature is not confined to inter-frame coding; a similar feature is used for example in~\cite{Shortened_HARQ} to design efficient channel codes for the HARQ schemes. The resemblance between inter-frame decoding and iterative LT erasure decoding is clarified next.

In inter-frame decoding, step $3$ is invoked only when the corresponding frame $f$ is successfully decoded. This has two implications. First, the LLRs corresponding to the successfully-decoded frame $f$ are now set to $\pm\infty$; therefore the LLR update in step 3a is simplified into changing signs of the LLR-vector $\Lambda_S(s)$, equivalent to performing $\Delta$ 2-input XOR operations. Second, step 3a will only be invoked a maximum number of times that is equal to the number of nonzero entries of $\mathbf{H}$, which is $D\!\cdot\! N_F$.

The LT iterative decoding procedure in~\cite{Luby_LT} can be viewed as a series of successive edge-processing steps; processing each edge would result in effectively removing it from the code bi-partite graph. The mentioned features of inter-frame decoding have their counterparts in LT decoding. First, edge processing is invoked when an input symbol is recovered. Second, one edge in the LT bipartite graph, or equivalently one non-zero entry of the generator matrix, is processed at most once in the whole decoding procedure. Third, this edge processing consists of a number of 2-input XOR operations.

As a result, the following conclusion can be made: in addition to the conventional intra-frame decoding, inter-frame decoding involves operations that are similar, in both their scheduling and type, to operations involved in LT iterative erasure decoding.

{\bf A note on complexity:} the decoding procedure involves the following operations per frame: the XOR-operation (step 3a) is performed $D\cdot\Delta$ times, the intra-frame encoding procedure in step 2 is performed $1$ time, and the intra-frame decoding procedure is performed an average of $X\!>\!1$ times. Typically, the intra-frame decoding procedure is much more computationally intensive compared to intra-frame encoding and the $D\cdot\Delta$ XOR operations. The average number of intra-frame decoding attempts per frame, denoted here by $X$, is dependent on 1) the intra-frame decoding failure-rate and its change with the concatenation of the increments LLRs, and 2) the criteria with which a frame is selected from $\mathcal{P}$ for intra-frame decoding. The first factor is a function of the channel and the deployed intra-frame code. The second factor raises the problem of sorting the frames scheduled for intra-frame decoding according to the probability of intra-frame decoding success, given the LLRs corresponding to each frame. Therefore, this problem necessitates developing a relatively low-complexity method to estimate the probability of intra-frame decoding success of a frame without performing the computationally-intensive decoding itself. If such method exists, intra-frame decoding of a frame is done only when the method predicts that the decoding of the frame will succeed. This problem, however, is outside the scope of this paper.

\subsection{Discussion}
The main advantage of the proposed inter-frame coding scheme can be stated as follows: it pertains the implementation-friendly features of the state-of-the-art two-stage scheme, while achieving higher data-rates.

The first part of this previous statement can be deduced from the description of the inter-frame decoding algorithm. The resemblance between iterative LT erasure decoding and the proposed inter-frame decoding implies that the advantages attributed to erasure decoding in the two-stage error-control scheme can also be pertained in inter-frame decoding. This includes the simplicity of decoding: aside from the conventional intra-frame decoding, inter-frame decoding involves relatively simple operations that need no special dedicated hardware. It also includes the flexibility in the design of inter-frame codes, i.e. of the code generator matrix $\mathbf{H}$; one feature of such flexibility is the ability to handle large number, determined in real-time, of frames. The resemblance suggests also that inter-frame decoding can be efficiently implemented using a multi-layer approach, involving both the PHY-layer and APP-layer, so that the hardware-overhead of inter-frame decoding can be kept minimal.
The comparison between the two-stage scheme and inter-frame coding, in terms of complexity and implementation,
depends on the communication schemes and protocols, as well as on the system architecture and implementation. A detailed treatment of this subject is, therefore, not in the scope of the paper.

The second part of the statement, on the enhancement brought by inter-frame coding to the achievable data-rates, will be quantitatively studied in the rest of the paper.

\section{Optimality Analysis of Rate-Matching under Inter-frame Decoding}\label{s:Code-Analysis}
\noindent In this section, the rate-matching process involved in inter-frame decoding is analyzed; thus, the data-rates achievable by inter-frame coding are derived, under some assumptions on the channel and deployed intra-frame code. They are compared to the data-rates achievable by the conventional solutions in Section~\ref{s:Comparison}.

In this paper, the data-rate achievable by a scheme is measured in terms of the resulting \emph{effective frame-length} defined as: the average number of bits, per $K$-bit data block, that is required to be transmitted for the success of the recovery process\footnote{The recovery processes considered and compared in this paper (Section~\ref{s:Comparison}) are: inter-frame coding, two-stage scheme, and frame-wise feedback-based rate-matching.}. The effective frame-length is therefore equal to the total number of transmitted bits divided by $N_F$. Thus, for an inter-frame code with parameters $(N_F,K_S,N,\Delta)$, the effective frame-length is:
\begin{equation}
\frac{N_F\cdot N+ K_S \cdot \Delta}{N_F}=N\cdot\left(1+\frac{K_S}{N_F}\cdot\frac{\Delta}{N}\right).
\end{equation}
In this section, each of the $\Delta$ and $N$ values are assumed to be predefined; in addition, they are fixed across the proposed and conventional schemes. Finding the effective frame-length attained by inter-frame coding is then equivalent to finding the minimum value of $\frac{K_S}{N_F}$ that is sufficient for the inter-frame decoding process to succeed. The inter-frame decoding success criterion is formally defined in subsection~\ref{s:Bound}.

In this section, the effective frame-length is derived by analyzing the inter-frame decoding process. The analysis is preceded by setting a simplifying model of the across-frame channel variability. As a result of this modeling, inter-frame coding is ``abstracted" into an iterative rate-matching process in which the code-rate of each frame is progressively decreased to its appropriate rate. It is this rate-matching process that is analyzed in the rest of the section, with the aim of finding the minimum $\frac{K_S}{N_F}$ it requires to succeed. The performed analysis is asymptotic, meaning that it assumes $N_F$ to be infinitely large. The asymptotic nature of the analysis is motivated by three main considerations.
First, asymptotic analysis provides the performance limits of the rate-matching process. One side result is that it becomes possible to quantify the performance degradation caused by constructing inter-frame codes of relatively small sizes. Second, some complications rising in the analysis, when $N_F$ is finite, can be overlooked in the asymptotic case; an example of such complication is the existence of some correlation of the channel-state across frames. This problem can be approached by interleaving the frames or including the consecutive frames into different inter-frame codes, both techniques being unrestricted in the asymptotic case by any upper limit on $N_F$.
Third, the asymptotic performance of the rate-matching process can be well analyzed and optimized using a small set of compact mathematical expressions and inequalities.

\subsection{Channel Variability Model}\label{s:Model}
\noindent In this subsection, the variation of the channel state across frames is abstracted into a simple across-frame appropriate-rate variability model. This is done by setting some simplifying assumptions on the intra-frame code, channel, and communication scenario. The model is general enough to encompass different communication scenarios. Yet, its simplicity allows efficient analysis of inter-frame decoding by abstracting it into a rate-matching process. The assumptions underlying the model are stated as follows:

\emph{Assumption}~1: For any integer $m\geq 0$, the decoding performance of the intra-frame code, formed by concatenating the original $N$-bit frame to any set of $m$ distinct increments, equals that of a conventional code of rate $\frac{K}{N+m\cdot\Delta}$. The decoding performance here is measured in terms of the intra-frame decoding-failure rate.
This means that this decoding performance is sensitive to the number but not the indices of the concatenated increments. The design of intra-frame codes with such property is not trivial given the observation made in\cite{HARQ_Gain} that the sensitivity of the turbo decoding performance to the LLRs of different portions of the code differs.

\emph{Assumption}~2: Given the channel statistical model, the channel-states over which different frames are transmitted are assumed independent instances. This assumption has no bearing on the validity of the model. It is made here since, as briefly discussed in the introduction of this section, channel-state correlation can be overlooked in the ongoing asymptotic analysis.

\color{black}
\emph{Assumption}~3: The number of increments required to be concatenated to a frame $f$ for successful decoding depends, solely, on the channel-state when the $N$-bit portion of $f$ is transmitted. This assumption is clearly not realistic because it ignores the facts that 1) in most cases, the channel-state during the transmission of the increments concatenated to $f$ cannot be deduced from the channel-state when the $N$-bit portion is transmitted, and 2) the channel-state varies across the different increments transmission, so that concatenating a number of increments, transmitted over \emph{good} channel-state, to $f$  may be sufficient to successfully decode $f$, while concatenating a larger number of increments, transmitted over \emph{worse} channel conditions, may not be sufficient.\\
This assumption is intended primarily to simplify the channel variability model developed next in this subsection; consequently, it makes it possible to clearly identify the rate-matching process underlying inter-frame decoding and to deduce the relation between this process and iterative erasure decoding applied on bi-partite graphs. This relation will help understand, analyze, and optimize the rate-matching process as will be clear throughout the rest of the section.

%It is noteworthy this assumption can be a reasonable approximation when the $N$-bit frame $f$ is transmitted over bad channel conditions such that a relatively large number of concatenated increments is required for successful decoding of $f$, and the increments will be transmitted over independent channel instances. Therefore, it can be assumed that: 1) the average channel behaviour during the increments transmission can be well approximated from the statistical description of the channel, 2) consequently, the required number of concatenated increments for successful decoding of $f$ can be also approximated, and 3) that, with a high probability, the actual number of required increments will not drastically deviate from this latter approximation.

\color{black}

\textbf{Model:} A received frame $f$ is characterized by an integer, $\kappa_{(f)}\geq0$, defined as the number of LLR increments, each $\Delta$-LLR wide, that should be concatenated to $\Lambda_N(f)$ for successful decoding of frame $f$. In inter-frame decoding, this means at least $\kappa_{(f)}$ subframes should be concatenated to $f$, for intra-frame decoding of $f$ to be successful. Therefore, the channel behavior in inter-frame decoding can be modeled as follows: \emph{it is characterized by a probability distribution over $\mathbb{Z^+}$, $(\delta_{(\omega)})_{\omega\geq 0}$, where $\delta_{(\omega)}$ is the probability that a frame $f$ transmitted over the channel has its $\kappa$-value $\kappa_{(f)}=\omega$.} To conclude, the model associates to each frame its appropriate code-rate, that is randomly sampled according to a probability distribution. This probability distribution depends on the channel characteristics, deployed intra-frame encoder/decoder and transmission scheduling. Therefore, its derivation will not be considered in this paper.

\textbf{Rate-matching process:} Under the developed model, inter-frame decoding can be projected as or, in other words, abstracted to a rate-matching process, in which the frame-length of each frame $f$ is increased to a value greater than or equal to $N+\Delta\cdot\kappa_{(f)}$. Therefore, the frame is assigned its appropriate rate determined by $\kappa_{(f)}$.
This process has two inputs: 1) the $N_F$ $\kappa$-values corresponding to the $N_F$ frames and 2) the binary representation $\mathbf{H_b}\!=\![h_{b(i,j)}]$ of $\mathbf{H}$, where $h_{b(i,j)}$ is set to $1$ if $h_{(i,j)}\!\neq\! 0$ and $0$ otherwise. The dependance of inter-frame decoding on $\mathbf{H_b}$ rather than on $\mathbf{H}$ is justified by assumption 1 stated in this subsection, where intra-frame decoding is assumed insensitive to the increment indices. For sake of illustration, the rate-matching procedure corresponding to the inter-frame decoding algorithm in subsection~\ref{s:DecProc} is described next. The definition and initialization of sets $\mathcal{R}$, $\mathcal{U}$, $\mathcal{P}$ are not altered and thus are not restated here. In the rate-matching procedure, a variable $\xi_{(f)}$ is associated to each frame $f$, where $\xi_{(f)}$  is the number of subframes concatenated to frame $f$; $\xi_{(f)}$ is initialized to 0. The procedure repeats the following iteration as long as $\mathcal{P}\neq\varnothing$:

{\bf Iteration:} If $\mathcal{P}\!=\!\varnothing$, quit decoding, else, pick randomly an element $f$ from $\mathcal{P}$. If $\xi_{(f)}<\kappa_{(f)}$, intra-frame decoding is unsuccessful, move $f$ to $\mathcal{U}$ and go back to the beginning of the step. If $\xi_{(f)}\geq \kappa_{(f)}$, intra-frame decoding is successful and $f$ is recovered, then proceed as follows:
\begin{itemize}
\item[1.] $f$ is moved from $\mathcal{P}$ to $\mathcal{R}$.
\item[2.] For every $s$ where $h_{b(s,f)}=1$:
 \begin{itemize}
  \item[a.] Set $h_{b(s,f)}$ to $0$.
  \item[b.] If the number of nonzero entries of row $s$ of $\mathbf{H_b}$ reduces to $1$, where $h_{b(s,f')}=1>0$, that is only one frame neighbor, $f'$, of $s$ is still unrecovered, then: 1) if $f'\in \mathcal{U}$, $f'$ is moved to $\mathcal{P}$, that is, frame $f'$ is rescheduled for intra-frame decoding, and 2) $\xi_{(f')}$ is incremented by $1$ to account for the concatenation of subframe $s$ to frame $f'$.
 \end{itemize}
\end{itemize}

\subsection{Two-phase Message-passing Procedure}\label{s:BiPartite}
\noindent The rate-matching procedure is mapped to a two-phase message-passing procedure applied on a bi-partite graph. The message-passing procedure has two features. First, it is equivalent to the rate-matching procedure in the following sense: it determines, for each frame, how much its code-rate will be decreased throughout the rate-matching process, and thus which frames will be recovered. Second, it is suitable for asymptotic analysis.

\textbf{Bi-partite graph description:} The inter-frame code can be described using a bi-partite graph as illustrated in Fig.~\ref{f:Bipartite}. The first partition $\mathcal{V}$ includes $N_F$ variable nodes, each corresponding to a frame, while the second partition $\mathcal{C}$ includes $K_S$ check nodes, each corresponding to a subframe. By abuse of notation, $v$/$c$ is used to index the corresponding frame/subframe. An edge exists between variable $v$ and check $c$ if the corresponding frame and subframe are neighbors; $v$ and $c$ are neighbor nodes. The degree of a node is the number of edges connected to it.
For each set of $N_F$ received frames, variable-node $v$ is associated to a value $\kappa_{(v)}$, sampled from $(\delta_{(\omega)})$.

\begin{figure}[hbtp]
\centering
\includegraphics[scale=0.6]{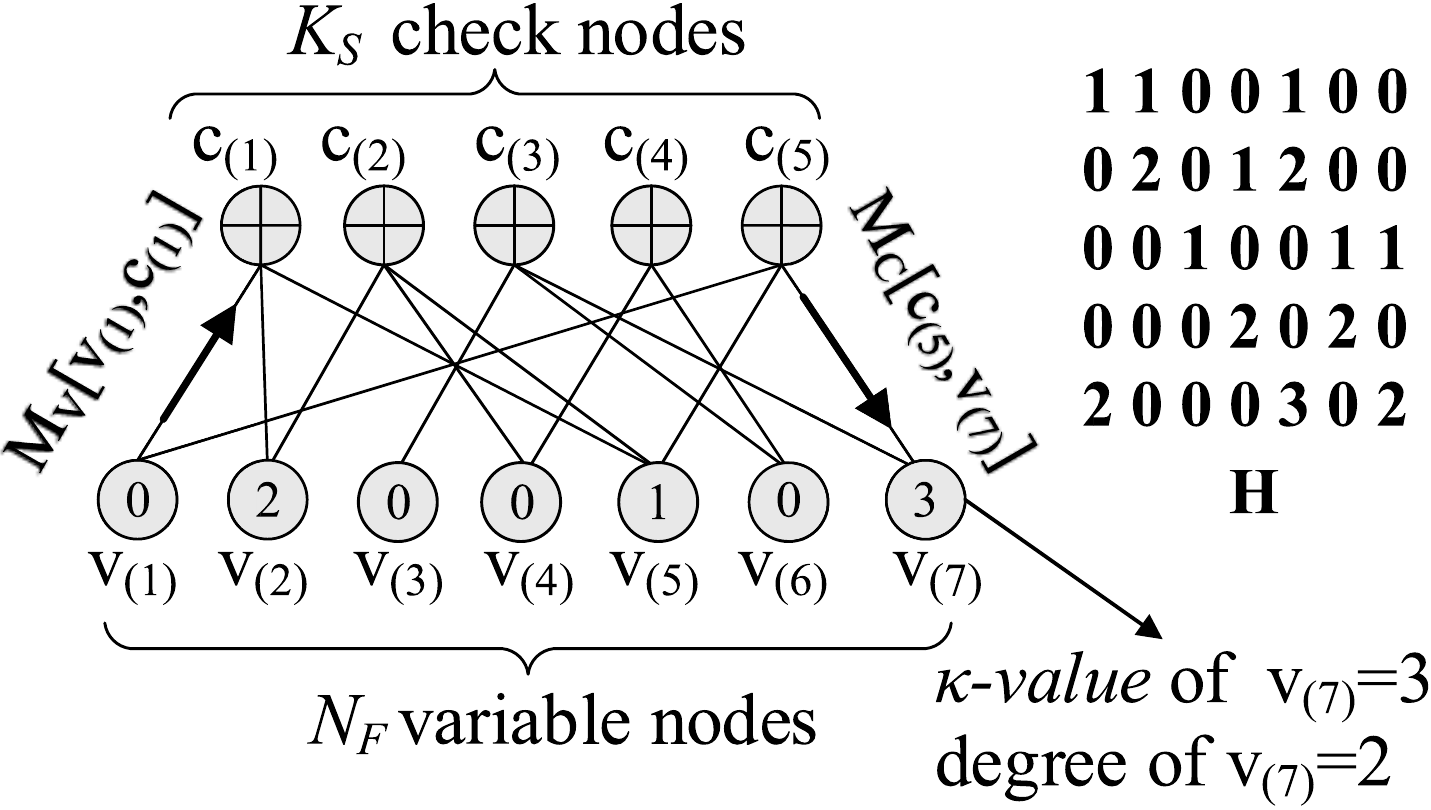}
\caption[Bipartite graph modeling of an inter-frame code]{Bipartite graph corresponding to an inter-frame code. Each variable node is indexed by its $\kappa$-value. Note that since the $\kappa$-value of node $v_{(7)}$ is greater than its degree, frame $7$ cannot be recovered.}
\label{f:Bipartite}
\end{figure}

\textbf{Two-phase message-passing procedure:} It is an iterative procedure applied on the bi-partite graph. Per iteration, along any edge $(v,c)$ connecting variable node $v$ and check node $c$, two binary messages are exchanged:  the variable-to-check message $M_V[v,c]$ and check-to-variable message $M_C[c,v]$. The procedure goes as follows: ($N_{(v)}/N_{(c)}$ is the set of neighbors of $v/c$)\\
\textbf{Initialization:} For every edge $(v,c)$ of the graph, $M_V[v,c]\leftarrow (\kappa_{(v)}==0)$; that is, the variable to check message, $M_V[v,c]$ is set to $1$ if $\kappa(v)$ is equal to $0$, and to $0$ otherwise.\\
\textbf{For every iteration $i=1\cdots MAXITER$:}
\begin{itemize}
\item \emph{Check-to-variable message update:} For every edge $(v,c)$ of the graph:
 \begin{equation}
 M_C[c,v]\leftarrow \prod_{v'\in \mathcal{N}_{(c)}\atop v'\neq v}M_V[v',c]
  \end{equation}
  That is, $M_C[c,v]$ is set to $1$ if all the variable-to-check messages incoming from all the neighbor nodes of $c$, excluding $v$, is $1$, and to $0$ otherwise.
\item \emph{Variable-to-check message update:}  For every edge $(v,c)$ of the graph:
 \begin{equation}
 M_V[v,c]\leftarrow \left(\kappa_{(v)}-\sum_{c'\in \mathcal{N}_{(v)} \atop c'\neq c}M_C[c',v]\leq0\right)
 \end{equation}
 That is, $M_V[v,c]$, is set to $1$ if the number of $1$-valued check-to-variable messages sent from all neighbors of $v$, excluding $c$, to $v$ is greater than or equal to $\kappa_{(v)}$.
\item \emph{Recovery flag update:} For every variable node $v$ of the graph:
 \begin{equation}
 U_{(v)}\leftarrow \left((\kappa_{(v)}-\sum_{c'\in \mathcal{N}_{(v)}}M_C[c',v])\leq0\right)
  \end{equation}
 That is, $U_{(v)}$ is set to $1$, if the number of $1$-valued check-to-variable messages sent from all neighbors of $v$ to $v$ exceeds $\kappa_{(v)}$. A value $1$ of $U_{(v)}$ indicates the frame corresponding to $v$ is recovered in the inter-frame decoding process.
\end{itemize}

The proof that this procedure determines the result of the rate-matching process and thus that of inter-frame decoding is intuitive, but the details are tedious. It will be omitted for brevity.

\textcolor{black}{
The described algorithm generalizes a two-phase message-passing algorithm that is applied in erasure decoding of LDPC codes and of the codes defined in~\cite{ErasureCodes,Practical_Loss_Resilient}. The erasure-decoding case is obtained by limiting the possible $\kappa$-values to $0$ and $1$, where $\delta_{(\omega)}=0$ for $\omega>1$. In this case, the following analogy can be made: 1) the bipartite graph described here can be viewed as representing an LDPC code, having $N_F$ variable nodes and $K_S$ check nodes, 2) for this code, the symbol corresponding to any variable node $v$ of this bipartite graph is considered erased if $\kappa_{(v)}=1$ and non-erased otherwise: $\delta_{(1)}$ represents the erasure rate, and 3) the algorithm developed here describes the progress of the iterative erasure decoding applied on the LDPC bipartite graph.
}

\subsection{Optimality Bound}\label{s:Bound}
\color{black}
\noindent In this subsection, the definition of optimality of the rate-matching process is developed.

For each distribution $(\delta_{(\omega)})$, the expected number of increments required to be concatenated to a single frame is $E(\kappa_{(f)})=\sum_\omega(\omega\cdot\delta_{(\omega)})$.

In inter-frame coding, the average number of transmitted, and therefore available, subframes per frame is $\frac{K_S}{N_F}$.
Besides, a subframe is concatenated to at most one frame throughout inter-frame decoding. Therefore, a necessary condition for this decoding to succeed to recover all frames is that $K_S\geq \sum_{f=1}^{N_F}\kappa_{(f)}$, or equivalently $\frac{K_S}{N_F}\geq \frac{\sum_{f=1}^{N_F}\kappa_{(f)}}{N_F}$.
Now, by the law of large numbers $\frac{\sum_{f=1}^{N_F}\kappa_{(f)}}{N_F}\rightarrow \sum_\omega(\omega\cdot\delta_{(\omega)})$, as $N_F\rightarrow\infty$.

A resulting question is then whether the rate-matching process can be applied successfully, to recover all or nearly all the $N_F$ frames with a probability close to $1$, while having: the ratio of \emph{the average number of available subframes per frame, $\frac{K_S}{N_F}$,} to\emph{ the expected number of required subframes per frame, $\sum_\omega(\omega\cdot\delta_{(\omega)})$,} approaching $1$ as $N_F\rightarrow \infty$. If the answer is positive, the rate-matching process achieves the lower bound on $\frac{K_S}{N_F}$ determined by $(\delta_{(\omega)})$: it is called \emph{optimal}.

Formal optimality definition: the rate-matching process is optimal if: for any distribution $(\delta_{(\omega)})$ and $\epsilon>0$, there exist inter-frame codes that can be defined over increasing values of $N_F$ such that: as $N_F\rightarrow \infty$, 1) $\frac{K_S}{N_F}\rightarrow \sum_\omega(\omega\cdot\delta_{(\omega)})$, and 2) the probability \emph{that a randomly-chosen frame is successfully-decoded in the inter-frame decoding process} is greater than $1-\epsilon$.

A relation between the definition of optimality made here and capacity achieving coding can be deduced from the link observed in subsection~\ref{s:BiPartite} between the rate-matching process and iterative LDPC erasure decoding.
When the special case characterized by having $\delta_{(\omega)}=0$ for $\omega>1$ is considered, the optimality of the rate matching process is equivalent to the following proposition: for an erasure channel of erasure rate $\delta_{(1)}$, there exist LDPC codes having rate $1-\frac{K_S}{N_F}\rightarrow 1-\delta_{(1)}$ as $N_F\rightarrow \infty$, and which iterative decoding succeeds when the erasure rate is $\delta_{(1)}$; this means these LDPC codes come arbitrarily close to achieving the capacity of an erasure channel with an erasure probability of $\delta_{(1)}$.

\color{black}

\subsection{Asymptotic Analysis}
\noindent A compact\footnote{The term ``compact", used here to describe a mathematical expression, means that the corresponding expression involves few well-defined terms.} mathematical expression that describes the progress of the message-passing procedure is derived in Theorem~\ref{t:Condition} of the subsection. This expression relates the degree-distributions, characterizing the bi-partite graph and thus the inter-frame code, to the distribution $(\delta_{(\omega)})$. It will
then be used to construct degree-distributions that describe inter-frame codes for which the rate-matching process achieves the optimal bound set in~\ref{s:Bound}.

\noindent \textbf{Assumption on $(\delta_{(\omega)})$:} an additional assumption is made here on $(\delta_{(\omega)})$, which is that $\delta_{(\omega+1)}=\mu\cdot\delta_{(\omega)}$, $\omega\geq1$, for some parameter $\mu<1$.  This means that $(\delta_{(\omega)})$
is fully described by two parameters, $\mu$ and $\delta=\sum_{\omega>0}\delta_{(\omega)}$, where: $\delta_{(0)}=1-\delta$ and $\delta_{(\omega)}=\delta\cdot(1-\mu)\cdot\mu^{\omega-1}$ for $\omega\geq1$.

\color{black}{
This geometric progression assumption is motivated by three factors. First, it simplifies the analysis done in this subsection, which, in turn, makes it easier to construct the optimal degree-distributions in the next subsection.
Second, it allows to apply a common practice in the field of mathematics to the analysis of inter-frame coding. This practice is
starting to solve a hard problem by solving special, and therefore simpler, cases of the problem. Applying this practice to the case of inter-frame coding means that a starting point to find whether the rate-matching process is optimal for all possible $(\delta_{(\omega)})$ distributions is to find whether this process is optimal for the special case of $(\delta_{(\omega)})$, characterized by the relation $\delta_{(\omega+1)}=\mu\cdot\delta_{(\omega)}$. No work has been done on the general case of $(\delta_{(\omega)})$ distribution in this paper though.\\
Third, the geometric progression assumption models $(\delta_{(\omega)})$ appropriately in some realistic scenarios. A simulation example is presented next to illustrate this point. The example is based on the following observation: within the channel variability model, when $i$ increments are concatenated to the $N$-bit portion of a frame, the probability that the intra-frame decoding of this frame fails equals $\sum_{\omega=i+1}^{\infty}\delta_{(\omega)}$. Now consider the curve of the intra-frame decoding-failure rate versus $i$, the number of concatenated increments:
that this curve is linear when the y-axis has a logarithmic scale is equivalent to $(\delta_{(\omega)})$ being a geometric progression. Fig.~\ref{f:ferplots_increments} shows two example curves of the intra-frame decoding-failure rate versus the number of concatenated increments, that are obtained through simulation as follows: for each integer $0\leq i\leq D$, the transmission and intra-frame decoding of a large number of frames of length $N+i\cdot\Delta$ are simulated. The setup for the transmission of each frame can be described as follows: a $K$-bit block is encoded via LTE turbo code~\cite{LTE_36.212} into a $(N\!+\!D\cdot\Delta\!<\!3K)$-bit encoded frame $f$ where $(K,N,\Delta)\!=\!(4096,K/0.75,0.1N)$. For every $i$, the transmission of the $(N\!+\!i\!\cdot\! \Delta)$-bit portion of the frame over a Pedestrian-B (PedB) fading channel~\cite{LTE_36.212}, subject to white Gaussian noise with fixed variance, is simulated using $16$-QAM modulation. It is assumed that the $N$-bit portion and each of the increments $\Delta(f,j)$, $1\!\leq\! j\!\leq\! i$, are transmitted over \emph{decorrelated} channel instances for curve $1$, and over the same channel instance for curve $2$. As shown in Fig.~\ref{f:ferplots_increments}, the curves obtained through simulation are nearly linear, justifying the assumption that the $(\delta_{(\omega)})$ corresponding to the simulated scenario is a geometric progression.

\begin{figure}[t]
\centering
\includegraphics[scale=0.37]{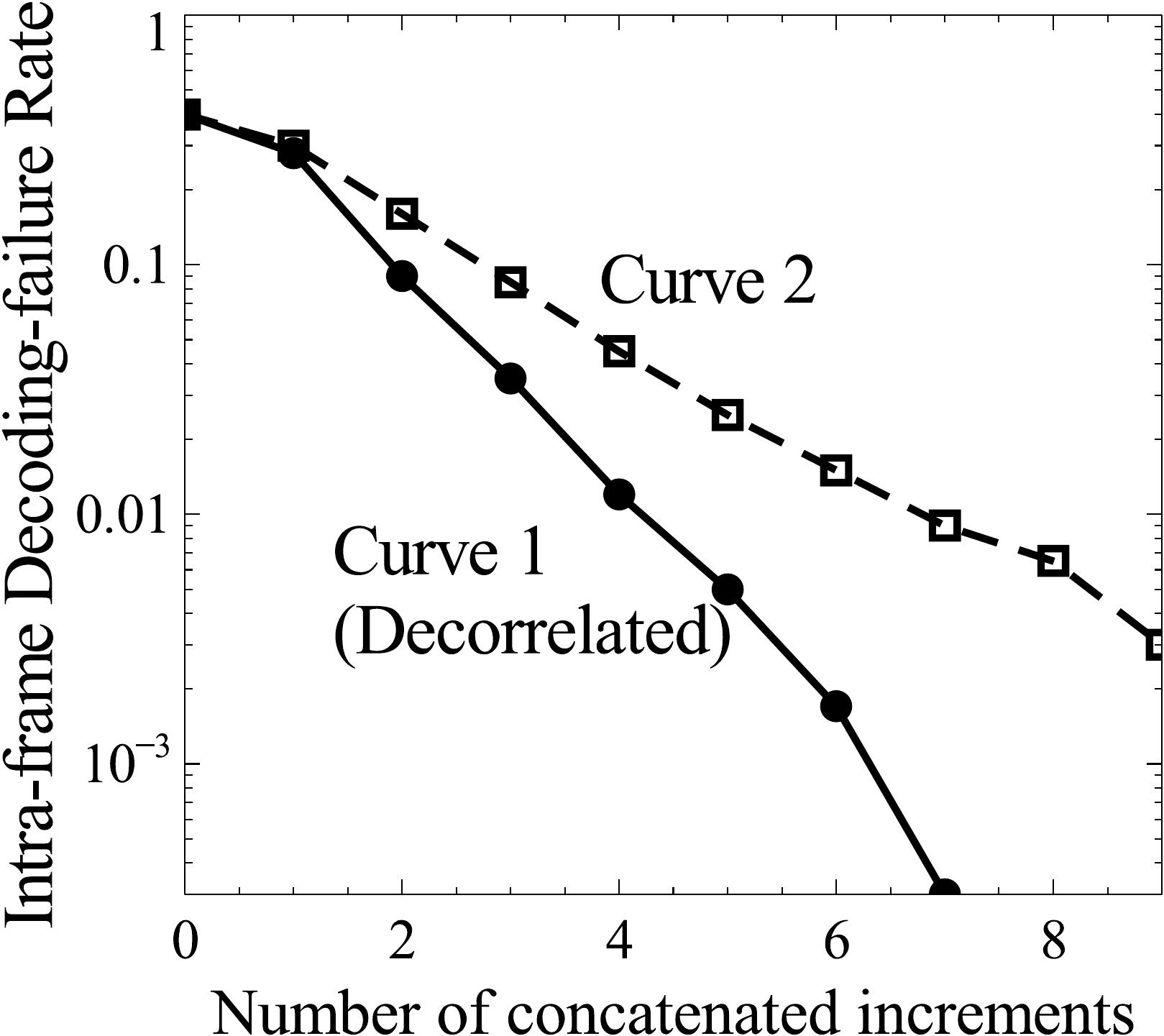}
\caption{Intra-frame decoding-failure rate versus the number of increments concatenated to the $N$-bit portion.}

\label{f:ferplots_increments}
\end{figure}

\color{black}

Since the developed message-passing algorithm generalizes that of erasure codes, the analysis performed here follows the technique of analysis of random processes via and-or tree evaluation in~\cite{AnalysisAndOr}. Therefore, the basic features of this technique are described first.

\noindent \textbf{Preliminaries:} The analyzed message-passing procedure is applied on bipartite graphs with the following properties: 1) $(N_F,K_S) \rightarrow (\infty,\infty)$, and 2) the graphs are constructed randomly according to two probability distributions: the edge variable and check degree-distributions, $(\lambda_{(i)})_{i\geq1}$ and $(\rho_{(i)})_{i\geq1}$ respectively. The following terminology is adopted from~\cite{ErasureCodes}: an edge in the bipartite graph drawn between variable node $v$ and check node $c$ has as variable degree the degree of $v$, and as check degree the degree of $c$. Then,  $\lambda_{(i)}(\rho_{(i)})$, $i\geq 1$, is the probability that a randomly-picked edge of the graph has variable(check) degree $i$. For any two distributions, $(\lambda_{(i)})_{i\geq1}$ and $(\rho_{(i)})_{i\geq1}$, random graphs with the respective edge degree-distributions can be constructed; these distributions characterize the constructed ensemble. The functions $\lambda(x)$ and $\rho(x)$ are defined as follows: $\lambda(x)=\sum_i\lambda_{(i)}\cdot x^{i-1}$ and $\rho(x)=\sum_i\rho_{(i)}\cdot x^{i-1}$.

\noindent \textbf{Analysis:} The message-passing procedure is applied on a randomly constructed bipartite graph. As in~\cite{AnalysisAndOr}, it is viewed as a random discrete process and the evolution of one of its parameters $Q_{(i)}$ throughout this process is analyzed, where $Q_{(i)}$ is defined as the probability \emph{that a randomly picked check-to-variable message at iteration $i$ is $1$}. Let $f(\cdot)$ be the function defined over $[0,1]$ such that $f(\cdot)$ maps $Q_{(i)}=y$ to $Q_{(i+1)}=f(y)$. Define $y^*\in[0,1]$ as follows:
\begin{equation}
f(y)>y  \indent   \forall 0\leq y<y^* \indent and \indent  f(y^*)=y^*.
\label{e:inequality1}
\end{equation}

The function $f(\cdot)$ is increasing; therefore, the convergence value of the sequence $\{Q_{(i)}\}$, $i=1\cdots\infty$, is $Q_{(\infty)}=y^*$. Inequality \eqref{e:inequality1} describes the progress of the message-passing procedure.

Let $g(x)$ be a function of $0\leq x \leq 1$, defined as the probability \emph{that a randomly-picked variable-to-check message is 0 }, given the probability \emph{that a randomly-picked incoming check-to-variable message is 0} is $x$, then $f(1-x)=\sum_i\rho_{(i)}.(1-g(x))^{i-1}=\rho(1-g(x))$. Similar to~\cite{AnalysisAndOr}, the latter equality uses the assumption the variable-to-check messages incoming to a check node of degree $i$ are independent random variables. Therefore, the probability \emph{that the outgoing check-to-variable message is $1$} is a product of $i-1$  identical values, each equal to $1-g(x)$ . Define $x^*=1-y^*$, inequality~\eqref{e:inequality1} can be reformulated to:
\begin{equation}
\rho(1-g(x))>1-x, \indent \forall x\in]x^*,1].
\label{e:inequality2}
\end{equation}

Inequality~\eqref{e:inequality2} is derived in~\cite{AnalysisAndOr}. The novel contribution of this subsection is shown next, where an inequality, specific to the message-passing algorithm developed in~\ref{s:BiPartite}, is derived.

\begin{theorem} (\emph{Message-passing procedure characterization})
For a distribution $(\delta_{(\omega)})$ described by $(\delta,\mu)$, and degree-distributions, $(\lambda_{(i)})_{i\geq1}$ and $(\rho_{(i)})_{i\geq1}$, the outcome of the message-passing procedure can be fully characterized by the inequality:
\begin{equation}
\rho(1-\delta\cdot\lambda(z)) > \frac{1-z}{1-\mu} \indent, \indent  \forall z\in]z^*,1].
\end{equation}
\label{t:Condition}
where $z^*=(1-y^*)+\mu \cdot y^*>\mu$.
\end{theorem}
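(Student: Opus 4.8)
The plan is to build on the and-or tree recursion already in place in this subsection, namely the identity $f(1-x)=\rho(1-g(x))$ together with inequality~\eqref{e:inequality2}, and to reduce the theorem to an explicit evaluation of the one remaining unknown, the function $g(\cdot)$. Recall that $g(x)$ is the probability that a randomly-picked variable-to-check message $M_V[v,c]$ equals $0$, conditioned on each of the incoming check-to-variable messages from the other neighbours of $v$ being $0$ independently with probability $x$. By the variable-to-check update rule, $M_V[v,c]=0$ exactly when the number of $1$-valued messages received from the $i-1$ neighbours of $v$ other than $c$ is strictly less than $\kappa_{(v)}$, where $i$ is the edge-degree of $v$. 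Conditioning first on $i$ (which occurs with probability $\lambda_{(i)}$) and then on $\kappa_{(v)}=\omega$ (which occurs with probability $\delta_{(\omega)}$), this count is Binomial$(i-1,1-x)$, so I would write
\begin{equation}
g(x)=\sum_{i\geq1}\lambda_{(i)}\sum_{\omega\geq1}\delta_{(\omega)}\sum_{k=0}^{\omega-1}\binom{i-1}{k}(1-x)^{k}x^{i-1-k},
\end{equation}
noting that the $\omega=0$ term contributes nothing, consistent with the initialization rule that a variable node of $\kappa$-value $0$ always emits $1$.

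The next step, and the one where the geometric-progression assumption is actually used, is to interchange the sums over $\omega$ and $k$: for a fixed $k$ the constraint $k\leq\omega-1$ becomes $\omega\geq k+1$, and $\sum_{\omega\geq k+1}\delta_{(\omega)}=\delta(1-\mu)\mu^{k}\sum_{j\geq0}\mu^{j}=\delta\mu^{k}$. Substituting this geometric tail collapses the triple sum to $g(x)=\delta\sum_{i\geq1}\lambda_{(i)}\sum_{k=0}^{i-1}\binom{i-1}{k}\bigl(\mu(1-x)\bigr)^{k}x^{i-1-k}$, and the binomial theorem turns the inner sum into $\bigl(x+\mu(1-x)\bigr)^{i-1}$. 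Recalling $\lambda(y)=\sum_{i}\lambda_{(i)}y^{i-1}$, this gives the clean closed form $g(x)=\delta\cdot\lambda\bigl(x+\mu(1-x)\bigr)$.

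Finally I would change variables to $z=x+\mu(1-x)=(1-\mu)x+\mu$, an increasing affine bijection of $[0,1]$ onto $[\mu,1]$, under which $g(x)=\delta\,\lambda(z)$ and $1-x=\frac{1-z}{1-\mu}$. Feeding these into~\eqref{e:inequality2}: the interval $x\in]x^*,1]$ is mapped bijectively onto $z\in]z^*,1]$ with $z^*=(1-\mu)x^*+\mu$, and since $x^*=1-y^*$ this is precisely $z^*=(1-y^*)+\mu y^*$, which strictly exceeds $\mu$ as long as $y^*<1$ (the non-degenerate case $\delta>0$). Inequality~\eqref{e:inequality2} then reads $\rho(1-\delta\lambda(z))>\frac{1-z}{1-\mu}$ for all $z\in]z^*,1]$, which is the assertion of the theorem.

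The substantive content is concentrated in the second step — the collapse of $\sum_{\omega\geq k+1}\delta_{(\omega)}$ to $\delta\mu^{k}$ followed by the binomial-theorem simplification to $\delta\lambda\bigl((1-\mu)x+\mu\bigr)$; without the geometric assumption the $\omega$-tail would not factor through the binomial sum and no comparably compact expression would emerge. The remaining risks are routine: one must rely on the same independence-of-incoming-messages heuristic already invoked for the and-or tree analysis in the preceding paragraphs (so it is inherited, not new), and one must keep the boundary bookkeeping straight — the vanishing $\omega=0$ term, the claim $z^*>\mu\iff y^*<1$, and the fact that the affine substitution carries the half-open interval $]x^*,1]$ onto $]z^*,1]$ exactly.
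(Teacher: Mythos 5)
Your proposal is correct and follows essentially the same route as the paper's own proof: condition on the variable degree and on $\kappa_{(v)}=\omega$ to get the binomial sum, interchange the $\omega$- and $k$-sums so the geometric tail collapses to $\delta\mu^{k}$, apply the binomial theorem to obtain $g(x)=\delta\cdot\lambda(x+\mu(1-x))$, and substitute into inequality~\eqref{e:inequality2} with the affine change of variable $z=x+\mu(1-x)$. Your explicit remarks on the vanishing $\omega=0$ term and on $z^*>\mu$ requiring $y^*<1$ are slightly more careful than the paper, which only asserts $z^*\geq\mu$ inside the proof.
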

\begin{proof}
Define $g_{(d;\omega)}(x)$ as the probability \emph{that a randomly-picked variable-to-check message is 0} given: 1) the corresponding variable-node $v$ has degree $d$, 2) $\kappa_{(v)}=\omega$ and 3) the probability \emph{that a randomly-picked incoming check-to-variable message is $0$} is $x$. Similarly, define $g_{(d)}(x)$ as the probability \emph{that a randomly-picked variable-to-check message is 0} given: 1) the corresponding variable-node $v$ has degree $d$ and 2) the probability \emph{that a randomly-picked incoming check-to-variable message is $0$} is $x$. The function $g(x)$ can be expressed as:
\begin{equation*}
g(x)=\sum_{d=1}^{\infty}\lambda_{(d)}\cdot g_{(d)}(x)   \indent, \indent   g_{(d)}(x)=\sum_\omega\delta_{(\omega)}\cdot g_{(d;\omega)}(x).
\end{equation*}
Then,
\begin{equation*}
g_{(d;\omega)}(x)=\sum_{j=0}^{\min(\omega,d)-1}\left(\begin{array}{c} d-1\\j\end{array}\right)\cdot(1-x)^j\cdot x^{d-1-j}.
\end{equation*}
\small
\begin{eqnarray*}
&&g_{(d)}(x)=\sum_\omega\delta_{(\omega)}\cdot g_{(d;\omega)}(x)\\
&=&\sum_{\omega=0}^{\infty}\delta_{(\omega)}\cdot \left(\sum_{j=0}^{\min(\omega,d)-1}\left(\begin{array}{c} d-1\\j\end{array}\right)\cdot(1-x)^j\cdot x^{d-1-j}\right)\\
&=& \sum_{j=0}^{d-1}\left(\left(\begin{array}{c} d-1\\j\end{array}\right)\cdot(1-x)^j\cdot x^{d-1-j}\cdot \sum_{\omega=j+1}^{\infty}\delta_{(\omega)}\right)\\
&=& \sum_{j=0}^{d-1}\left(\left(\begin{array}{c} d-1\\j\end{array}\right)\cdot(1-x)^j\cdot x^{d-1-j}\cdot \delta_{(1)}\cdot \mu^j\cdot \sum_{\omega=0}^{\infty}\mu^\omega \right)\\
&=& \left(\delta_{(1)}\cdot\sum_{\omega=0}^{\infty}\mu^\omega\right)\cdot\sum_{j=0}^{d-1}\left(\begin{array}{c} d-1\\j\end{array}\right)\cdot\mu^j\cdot(1-x)^j\cdot x^{d-1-j}\\
&=& \frac{\delta_{(1)}}{1-\mu}\cdot(x+\mu\cdot(1-x))^{d-1}=\delta\cdot(x+\mu\cdot(1-x))^{d-1}.
\end{eqnarray*}\normalsize

Therefore:
\begin{equation}
g(x)=\sum_{d=1}^{\infty}\lambda_{(d)}g_{(d)}(x)=\delta\cdot\lambda(x+\mu\cdot(1-x)).
\label{e:inequality33}
\end{equation}
Substituting the obtained expression of $g(x)$ in inequality (\ref{e:inequality2}), and applying the following change of variable $z=x+\mu(1-x)$, we get: for $z^*=(1-y^*)+\mu\cdot y^*\geq\mu$:
\begin{equation*}
\rho\left(\begin{array}{c}1-\delta\cdot\lambda(z)\end{array}\right)>\frac{1-z}{1-\mu}, \indent  z\in]z^*,1].
\end{equation*}
\end{proof}

The derived analytic expression is a generalization of that corresponding to erasure decoding~\cite{AnalysisAndOr}: by setting $\mu$ to $0$, the derived inequality reduces to $\rho(1-\delta\cdot\lambda(z))>1-z$, $1-y^*<z\leq1$. This result is not unexpected since the message-passing algorithm developed here is itself a generalized form of a LDPC erasure-decoding message-passing algorithm.

\subsection{Optimal Degree Distributions}
\noindent Based on the analysis in the previous subsection, degree distributions are constructed next, such that these distributions characterize inter-frame codes for which the rate-matching process succeeds in the sense defined in subsection~\ref{s:Bound}, and which ratio $\frac{K_S}{N_F}$ approaches the optimal lower bound $\sum_\omega(\omega\cdot\delta_{(\omega)})=\frac{\delta}{1-\mu}$.
Towards this aim, a sequence of degree-distribution couples parameterized by $J=1 \cdots \infty$, $\left((\lambda_{(i)})_{i\geq1},(\rho_{(i)})_{i\geq1}\right)^{(J)}$, is constructed such that:

%\begin{enumerate}
\noindent 1. $\frac{K_S}{N_F}\rightarrow \sum_{\omega\geq1}(\omega\cdot\delta_{(\omega)})=\frac{\delta}{1-\mu}$, as $(J,N_F)\rightarrow \infty$.

\noindent 2. $\exists$ $0<Z_0<1$ and $J^\star \in \mathbb{Z^+}$, such that:
\begin{equation}
\rho^{(J)}(1-\delta\cdot\lambda^{(J)}(z)) > \frac{1-z}{1-\mu},\indent \forall J>J^\star \indent \text{and} \indent Z_0<z\leq1.
\end{equation}

\indent This inequality is that obtained in Theorem~\ref{t:Condition} of the previous subsection.

\noindent 3. The minimum variable-node degree goes to $\infty$ as $J\rightarrow \infty$.
%\end{enumerate}

Next, it is shown that if properties (1)-(3) are satisfied, then the rate-matching process is optimal in the sense formally described in subsection~\ref{s:Bound}.

Property (1) means that $\frac{K_S}{N_F}$ converges to the optimal value of $\sum_{\omega\geq1}(\omega\cdot\delta_{(\omega)})$ as $N_F\rightarrow \infty$.

Properties (2) and (3) imply that the degree distributions characterize inter-frame codes for which the inter-frame decoding succeeds in the sense defined in subsection~\ref{s:Bound}; this is  because for any $\epsilon>0$, there exists an integer $J_{(\epsilon)}\in\mathbb{Z^+}$, such that: the probability \emph{that a frame $f$ is unrecovered in the decoding of a random code constructed according to $\left((\lambda_{(i)})_{i\geq1},(\rho_{(i)})_{i\geq1}\right)^{(J)}$}, converges to a value less than $\epsilon$ as $N_F\rightarrow\infty$ if $J>J_{(\epsilon)}$.
This latter result is proved in the rest of this paragraph. Property (2) implies that, $\forall J>J^\star$, the value $z^*$ defined in the previous subsection is less than $Z_0$, or equivalently that the sequence $\{Q_{(i)}\}$ converges to a value $Q_{(\infty)}=y^*$ that is greater than $Y_0=\frac{1-Z_0}{1-\mu}$. Let $\tau$ be the minimum variable-node degree, i.e. $\lambda_{(i)}=0$ for $i<\tau$. Consider the probability \emph{that a randomly-picked variable-to-check message is 0} given 1) the corresponding variable-node $v$ has degree $\tau$ and 2) the probability \emph{that a randomly-picked check-to-variable message incoming to $v$ is $0$} is $1-y^*$: it is denoted by $g_{(\tau)}(1-y^*)$ in the proof of Theorem~\ref{t:Condition} and equals $\delta\cdot((1-y^*)+\mu\cdot y^*)^{\tau-1}=\delta\cdot((1-y^*\cdot(1-\mu))^{\tau-1}<\delta\cdot((1-Y_0\cdot(1-\mu))^{\tau-1}$. Property (3) means that $\tau$ goes to $\infty$ as $J\rightarrow \infty$, and therefore $g_{(\tau)}(1-y^*)<\delta\cdot((1-Y_0\cdot(1-\mu))^{\tau-1}$ approaches $0$ as $J$ goes to $\infty$. Furthermore, it can be checked that the probability \emph{that a frame corresponding to a randomly-picked variable node $v$ is unrecovered}, given the probability \emph{that a randomly-picked incoming check-to-variable message is $0$} is $x$, is less than $g_{(\tau)}(x)$. Therefore, as the probability \emph{that a randomly-picked check-to-variable message is $1$} converges to $y^*>Y_0$, the probability \emph{that a frame corresponding to a randomly-picked variable node is unrecovered} converges to $0$ as $(J,N_F)\rightarrow\infty$.

\noindent \textbf{Degree-distributions:} A sequence of degree-distributions is constructed next. Then, it is proved that this sequence satisfies properties (1)-(3). For sake of brevity, the derivations and proof details in the rest of this subsection are omitted.

The distributions constructed here are a generalization of the optimal distributions of the erasure codes in\cite{ErasureCodes}. The erasure code distributions are described briefly next. The $\lambda$-distribution is $\lambda_{(i)}=\frac{1}{H\cdot(i-1)}$, $i=2,\cdots,d+1$, for a chosen integer $d$ and $H=\sum_{i=1}^{d}\frac{1}{i}$. The average variable-node degree is $a_v=(\sum_i\frac{\lambda_{(i)}}{i})^{-1}=H\cdot(1+1/d)$.
The function $\lambda(x)\sim-\ln(1-x)/H$ (but strictly less). The $\rho$-distribution is described as follows: $\rho_{(i)}=\frac{e^{-\alpha}\cdot\alpha^{i-1}}{(i-1)!}$, $i=1,\cdots,\infty$. Thus, $\rho(x)=e^{\alpha\cdot(x-1)}$. The average check-node degree is $a_c=(\sum_i\frac{\rho_{(i)}}{i})^{-1}=\alpha/(1-e^{-\alpha})$.
Then, we have for erasure codes:
\begin{eqnarray}
\rho(1-\delta\cdot\lambda(x))&=&e^{-\alpha\cdot\delta\cdot\lambda(x)}
>e^{\frac{\alpha\cdot\delta}{H}\cdot\ln(1-x)}=(1-x)^{\frac{\alpha\cdot\delta}{H}}\nonumber\\
&\geq& (1-x), \indent \forall 0\leq x \leq1.
\end{eqnarray}
The last inequality is valid when $\frac{a_v}{a_c}\geq\delta\cdot(1+1/d)$.

\emph{Proposed Degree Distributions:} The proposed inter-frame code distributions are constructed next. The $\lambda$-distribution is parameterized by two integers: the parameter $J$ of the constructed distribution couple $\left((\lambda_{(i)})_{i\geq1},(\rho_{(i)})_{i\geq1}\right)^{(J)}$ and another integer $d$. It is assumed that as $J\rightarrow\infty$ so does $d$, however, no specific relation involving both of them is imposed. In the rest of the section, to simplify notation, the super-index $(J)$ will be omitted from the $\lambda$ and $\rho$ terms. \\
The constructed $\lambda$-distribution can be described as: ($H=\sum_{i=1}^di^{-1}$)
\begin{equation}
\lambda_{(J\cdot(i-1)+1)}=\frac{1}{H\cdot(i-1)}    , \indent   i=2,\cdots, d+1.
\end{equation}
and $\lambda_{(k)}=0$, otherwise.\\
The construction of the $\rho$-distribution involves two distributions. Define the distribution $(\beta_{(\alpha;i)})$, where:
\begin{equation}
\beta_{(\alpha;i)}=\frac{e^{-\alpha}\cdot\alpha^{i-1}}{(i-1)!}.
\end{equation}
This distribution is similar to the $\rho$-distribution in erasure codes. Define another distribution $(\Omega_{(i)})$, parameterized by the integer $d_c$, as such:
\begin{equation}
\Omega_{(i)}=\frac{1}{i\cdot H_c}, \indent i=1,\cdots,d_c.
\end{equation}
where $H_c=\sum_{i=1}^{d_c}\frac{1}{i}$. The parameter $d_c$ is chosen such that:$\sum_{i=1}^{d_c}\frac{1}{i}\leq J\cdot(1-\mu)<\sum_{i=1}^{d_c+1}\frac{1}{i}$. From the definition of $d_c$, it can be deduced that $\frac{J}{H_c}\geq\frac{1}{1-\mu}$ and that $|\frac{J}{H_c}-\frac{1}{1-\mu}|$ goes to $0$ as $J\rightarrow 0$. The $\rho$-distribution is formed as follows:
\begin{equation}
\rho_{(i)}=\sum_{j=1}^{d_c}\Omega_{(j)}\cdot\beta_{(j\cdot H/\delta;i)}.
\end{equation}

\noindent\textbf{Proof that properties (1)-(3) are satisfied:} Property (3) is satisfied because the minimum variable-node degree, obtained from a distribution $(\lambda_{(i)})^{(J)}$, is $J+1$ which clearly goes to $\infty$ as $J\rightarrow \infty$.

\noindent Property (1) is shown to be satisfied as stated in Lemma~\ref{l:prop1}.

\begin{lemma}
$\frac{K_S}{N_F}\rightarrow \frac{\delta}{1-\mu}$ as $(J,d)\rightarrow (\infty,\infty)$.
\label{l:prop1}
\end{lemma}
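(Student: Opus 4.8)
The plan is to rewrite $K_S/N_F$ as a ratio of the two average node degrees of the bi-partite graph and then take the limit factor by factor. Counting edges from the two sides gives $N_F\,a_v=K_S\,a_c$, hence $\frac{K_S}{N_F}=\frac{a_v}{a_c}$, and with the edge-perspective identities already used in the text, $a_v=\bigl(\sum_i\lambda_{(i)}/i\bigr)^{-1}$ and $a_c=\bigl(\sum_i\rho_{(i)}/i\bigr)^{-1}$, the lemma reduces to proving
\begin{equation*}
\frac{K_S}{N_F}=\frac{S_\rho}{S_\lambda}\longrightarrow\frac{\delta}{1-\mu},\qquad S_\lambda:=\sum_i\frac{\lambda_{(i)}}{i},\quad S_\rho:=\sum_i\frac{\rho_{(i)}}{i}.
\end{equation*}
En route I would record that $(\lambda_{(i)})$ and $(\rho_{(i)})$ are genuine probability distributions, which is immediate from $\sum_{m=1}^{d}m^{-1}=H$ and $\sum_{j=1}^{d_c}\Omega_{(j)}=1$, so that the average-degree formulas apply.

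First I would evaluate $S_\lambda$: only the indices $k=J(i-1)+1$ contribute, and writing $m=i-1$ the algebraic identity $\frac{J}{m(Jm+1)}=\frac{1}{m^2}\bigl(1-\frac{1}{Jm+1}\bigr)$ gives $J\,H\,S_\lambda=\sum_{m=1}^{d}m^{-2}-\sum_{m=1}^{d}\frac{1}{m^2(Jm+1)}$, whose first term tends to $\sigma:=\sum_{m\ge1}m^{-2}$ as $d\to\infty$ while the second is bounded by $\frac{1}{J}\sum_{m\ge1}m^{-3}=O(1/J)$; hence $J\,H\,S_\lambda\to\sigma$. Next I would evaluate $S_\rho$ via the generating-function fact $\sum_i\beta_{(\alpha;i)}x^{i-1}=e^{\alpha(x-1)}$ (already used for the erasure $\rho$-distribution in the text), which gives $\sum_i\beta_{(\alpha;i)}/i=\int_0^1 e^{\alpha(x-1)}dx=\frac{1-e^{-\alpha}}{\alpha}$; substituting $\alpha=jH/\delta$ and summing against $\Omega_{(j)}=\frac{1}{jH_c}$ yields $S_\rho=\frac{\delta}{H H_c}\sum_{j=1}^{d_c}\frac{1-e^{-jH/\delta}}{j^2}$, so $\frac{H H_c}{\delta}S_\rho=\sum_{j=1}^{d_c}j^{-2}-\sum_{j=1}^{d_c}\frac{e^{-jH/\delta}}{j^2}$; here the first term tends to $\sigma$ (since $J\to\infty$ forces $d_c\to\infty$, because $H_c\ge J(1-\mu)-1$) and the second is at most $\frac{e^{-H/\delta}}{1-e^{-H/\delta}}\to0$ as $d\to\infty$, so $\frac{H H_c}{\delta}S_\rho\to\sigma$.

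Finally I would combine the two estimates as
\begin{equation*}
\frac{K_S}{N_F}=\frac{S_\rho}{S_\lambda}=\frac{\delta J}{H_c}\cdot\frac{(H H_c/\delta)\,S_\rho}{J\,H\,S_\lambda},
\end{equation*}
where the last fraction tends to $\sigma/\sigma=1$ and the prefactor satisfies $\frac{\delta J}{H_c}\to\frac{\delta}{1-\mu}$ by the defining inequality $H_c\le J(1-\mu)<H_c+\frac{1}{d_c+1}$ of $d_c$; hence $\frac{K_S}{N_F}\to\frac{\delta}{1-\mu}$. I do not expect a conceptual difficulty. The one point requiring care is that the limit is \emph{joint} in $(J,d)$ — $d$ drives $H$ and the decay of the exponential tail while $J$ drives $H_c$ and $d_c$ — so the two vanishing error terms, $\sum_m m^{-2}(Jm+1)^{-1}=O(1/J)$ and the geometric tail $\frac{e^{-H/\delta}}{1-e^{-H/\delta}}$, must be controlled simultaneously rather than through iterated limits, and one should observe that the common factor $H$ and the common constant $\sigma$ cancel cleanly in the final ratio.
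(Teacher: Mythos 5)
Your proof is correct and follows essentially the same route as the paper: edge-counting to get $\frac{K_S}{N_F}=\frac{a_v}{a_c}$, explicit evaluation of $\sum_i\lambda_{(i)}/i$ and $\sum_i\rho_{(i)}/i$, observing that both residual sums converge to $\sum_{i\ge1}i^{-2}$, and invoking $\frac{J}{H_c}\to\frac{1}{1-\mu}$ from the definition of $d_c$. The only difference is that you spell out the error bounds ($O(1/J)$ and the geometric tail) that the paper leaves implicit, which is a welcome tightening rather than a different argument.
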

\begin{proof}
Each of the expressions $K_S\cdot a_c$ and $N_F\cdot a_v$ represents the number of edges in the bi-partite graph, therefore $K_S\cdot a_c=N_F\cdot a_v$ which implies that $\frac{K_S}{N_F}=\frac{a_v}{a_c}$.

The average variable-node degree is:
\begin{equation}
a_v=\left(\sum_{j=1}^{\infty}\frac{\lambda_{(j)}}{j}\right)^{-1}=J\cdot H\left(\sum_{i=1}^{d}\frac{1}{i\cdot(i+1/J)}\right)^{-1}.
\end{equation}

The average check-node degree is:
\begin{eqnarray*}
a_c &=&\left(\sum_{i=1}^{\infty}\frac{\rho_{(i)}}{i} \right)^{-1}=\frac{H_c\cdot H}{\delta}\cdot\left(\sum_{j=1}^{d_c}\frac{1-e^{-j\cdot H/\delta}}{j^2}\right)^{-1}.
\end{eqnarray*}

As $(J,d)\rightarrow\infty$,  $\frac{J}{H_c}\rightarrow \frac{1}{1-\mu}$ and $d_c\rightarrow\infty$.
Besides, both $\left(\sum_{i=1}^{d}\frac{1}{i\cdot(i+1/J)}\right)$ and $\left(\sum_{j=1}^{d_c}\frac{1-e^{-j\cdot H/\delta}}{j^2}\right)$ converge to the same finite value $\left(\sum_{i=1}^{\infty}\frac{1}{i^2}\right)$. Overall, $\frac{K_S}{N_F}=\frac{a_v}{a_c}\rightarrow\frac{\delta}{1-\mu}$.
\end{proof}

\noindent The proof that property (2) is satisfied is more complicated. As a prelude, the expressions of $\lambda(x)$, $\rho(x)$, and $\rho(1-\delta\cdot\lambda(x))$ are obtained as follows:
\begin{eqnarray*}
\lambda(x)&=&\sum_{i=2}^{d+1}\frac{x^{J\cdot(i-1)}}{H\cdot(i-1)}\sim (<) -\frac{1}{H}\cdot\ln (1-x^J).
\end{eqnarray*}
\begin{eqnarray*}
\rho(x)&=&\sum_{i=1}^{\infty}\rho_{(i)}\cdot x^{i-1}=\sum_{j=1}^{d_c}\frac{1}{j\cdot H_c}e^{\frac{j\cdot H}{\delta}\cdot(x-1)}.
\end{eqnarray*}
\begin{eqnarray}
\rho(1-\delta\cdot\lambda(x))&>&\rho(1+\frac{\delta}{H}\cdot\ln(1-x^J))\nonumber\\
&=&\sum_{j=1}^{d_c}\frac{1}{j\cdot H_c}e^{j\cdot\ln(1-x^J)}\nonumber
\\&=&\frac{1}{H_c}\cdot\sum_{j=1}^{d_c}\frac{(1-x^J)^j}{j}.\label{eq:Tx}
\end{eqnarray}

A rigourous proof that property (2) is satisfied is based on the following three Lemmas.
\begin{lemma}
If $(1-x^J)^{d_c}<(1-x)$ $\forall$ $0<X_l<x<1$,  then $\frac{1}{H_c}\cdot\sum_{i=1}^{d_c}\frac{(1-x^J)^i}{i}>\frac{1-x}{1-\mu}$
$\forall$ $X_l<x<1$.
\label{l:f_fprime}
\end{lemma}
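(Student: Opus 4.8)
The plan is to view $\frac{1}{H_c}\sum_{i=1}^{d_c}\frac{(1-x^J)^i}{i}$ as a truncation of the full series $\frac{1}{H_c}\bigl(-\ln\!\bigl(1-(1-x^J)\bigr)\bigr)=\frac{-J\ln x}{H_c}$, and to show that this un-truncated quantity already exceeds the target $\frac{1-x}{1-\mu}$ by enough to absorb the truncation error. Write $u=1-x^J$, so $1-u=x^J$ and $u\in(0,1)$ for $x\in(0,1)$; then $\sum_{i=1}^{d_c}\frac{u^i}{i}=-J\ln x-R$ with tail $R=\sum_{i>d_c}\frac{u^i}{i}$. Using the elementary bound $-\ln x\ge(1-x)+\tfrac12(1-x)^2$ together with the defining inequality $\frac{J}{H_c}\ge\frac{1}{1-\mu}$ of $d_c$ (recall $H_c=\sum_{i\le d_c}i^{-1}\le J(1-\mu)$), one gets $\frac{-J\ln x}{H_c}\ge\frac{1-x}{1-\mu}+\frac{J(1-x)^2}{2H_c}$. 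Dividing the identity above by $H_c$, the lemma therefore reduces to the single estimate $R<\frac{J(1-x)^2}{2}$ for all $x\in(X_l,1)$.

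To control the tail I would use the geometric bound $R\le\frac{u^{d_c+1}}{(d_c+1)(1-u)}=\frac{u^{d_c+1}}{(d_c+1)x^J}$ and then invoke the hypothesis $u^{d_c}=(1-x^J)^{d_c}<1-x$ to replace the $u^{d_c}$ factor, giving $R<\frac{u(1-x)}{(d_c+1)x^J}$. Since $\frac{u}{1-x}=\frac{1-x^J}{1-x}=1+x+\dots+x^{J-1}\le J$, the needed bound $R<\frac{J(1-x)^2}{2}$ follows from the single clean inequality
\begin{equation}
(d_c+1)\,x^J\ \ge\ 2,\qquad x\in(X_l,1).
\end{equation}
Thus the whole lemma collapses to this last claim, and proving it uniformly on $(X_l,1)$ is the step I expect to be the real obstacle.

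For that claim I would proceed by cases. Where $x^J$ stays bounded below by a fixed positive constant the inequality is automatic, because $d_c\to\infty$ with $J$: indeed $d_c$ is the largest index with $\sum_{i\le d_c}i^{-1}\le J(1-\mu)$, so $d_c$ is of order $e^{J(1-\mu)}$ and in particular $d_c\to\infty$. The awkward regime is $x$ for which $x^J$ is small; there one has to feed the hypothesis back in: $(1-x^J)^{d_c}<1-x$ forces $d_c>\frac{\ln(1-x)}{\ln(1-x^J)}$, and combining this with the exponential lower bound on $d_c$ coming from its definition drives $(d_c+1)x^J$ above $2$ once $J$ is large. I would organise this as a split of $(X_l,1)$ into a neighbourhood of $1$ and its complement, with the neighbourhood of $1$ handled either by this reasoning or, if $1-x$ is so tiny that even the $\Theta((1-x)^2)$ surplus becomes delicate, by instead retaining just the first one or two terms of $\sum_{i=1}^{d_c}u^i/i$ and comparing $\frac{u}{H_c}\bigl(1+\tfrac{u}{2}\bigr)$ directly with $\frac{1-x}{1-\mu}$ via $\frac{u}{1-x}=1+x+\dots+x^{J-1}$. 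The fiddly bookkeeping of matching the two thresholds, and of tracking strictness of the chained inequalities, is where the bulk of the routine work lies.
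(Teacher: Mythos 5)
There is a genuine gap, and it sits exactly at the step you flag as ``the real obstacle.'' The inequality $(d_c+1)\,x^J\ge2$ to which you reduce the lemma is in fact \emph{false} on part of the admissible range, so the proposed repair cannot succeed as described. The hypothesis of the lemma holds on all of $(X_c,1)$ (Lemma~\ref{l:convergence_point}), so you must handle $x$ arbitrarily close to the crossover point $X_c$. There, $(1-X_c^J)^{d_c}=1-X_c$ gives $X_c^J=1-(1-X_c)^{1/d_c}\approx-\ln(1-X_c)/d_c$, hence $(d_c+1)x^J\to-\ln\!\bigl(1-e^{-(1-\mu)}\bigr)$ as $J\to\infty$; this limit is below $2$ whenever $\mu\lesssim0.85$ (for $\mu=0.5$ it is about $0.93$). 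The loss is introduced when you replace $u=1-x^J$ by $J(1-x)$: that bound is tight only near $x=1$ and costs a factor of order $J(1-x)$ when $x$ is bounded away from $1$, which is precisely the regime where $(d_c+1)x^J$ is only $O(1)$. (Your target $R<\tfrac{J(1-x)^2}{2}$ does hold there, but only because $R\le\frac{(1-x)u}{(d_c+1)x^J}\le\frac{1-x}{(d_c+1)x^J}$ is $O(1-x)$ while $\tfrac{J(1-x)^2}{2}\to\infty$; recovering this requires keeping $u\le1$ away from $x=1$ and switching to $u\le J(1-x)$ only near $x=1$, with a careful matching of the two thresholds.) Even after that repair, your argument is intrinsically asymptotic in $J$, whereas the lemma is stated and used for each fixed $J>J^\star$ with its own $d_c$.

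The paper omits its proof of this lemma, but the label \texttt{f\_fprime} points to the intended two-line argument, which avoids all of the above: compare derivatives. Let $F(x)=\frac{1}{H_c}\sum_{i=1}^{d_c}\frac{(1-x^J)^i}{i}$ and $G(x)=\frac{1-x}{1-\mu}$; both vanish at $x=1$, and
\begin{equation}
F'(x)=-\frac{J\,x^{J-1}}{H_c}\sum_{i=1}^{d_c}(1-x^J)^{i-1}=-\frac{J}{H_c\,x}\Bigl(1-(1-x^J)^{d_c}\Bigr),\qquad G'(x)=-\frac{1}{1-\mu}.
\end{equation}
Since $\frac{J}{H_c}\ge\frac{1}{1-\mu}$ by the definition of $d_c$, the hypothesis $(1-x^J)^{d_c}<1-x$ is exactly the statement $\frac{1-(1-x^J)^{d_c}}{x}>1$, whence $F'(x)<G'(x)$ on $(X_l,1)$; integrating from $x$ to $1$ gives $F(x)>G(x)$ there. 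No asymptotics, no case split, and the hypothesis enters verbatim. I would recommend abandoning the series-tail route in favour of this.
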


\begin{lemma}
$\forall \eta>0$, $\exists J_0$ such that $e^{1-\mu-\eta}<d_c^{\frac{1}{J}}<e^{1-\mu+\eta}$, $\forall J>J_0$.
\label{l:dc}
\end{lemma}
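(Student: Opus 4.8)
The plan is to turn the implicit definition of $d_c$ into a sharp asymptotic for $\ln d_c$ in terms of $J$, and then exponentiate. Recall $d_c$ is the integer with $H_{d_c}=\sum_{i=1}^{d_c}\frac1i\le J(1-\mu)<\sum_{i=1}^{d_c+1}\frac1i=H_{d_c+1}$. First I would record the elementary harmonic-number bounds $\ln(n+1)\le H_n\le 1+\ln n$ (both obtained by comparing the sum with $\int 1/t\,dt$, using $\frac1i\ge\int_i^{i+1}\frac{dt}{t}$ for the lower bound and $\frac1i\le\int_{i-1}^{i}\frac{dt}{t}$ for $i\ge2$ for the upper bound). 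Applying the lower bound with $n=d_c$ to the left half of the defining inequality, and the upper bound with $n=d_c+1$ to the right half, gives
\[
J(1-\mu)-1 \;<\; \ln(d_c+1) \;\le\; J(1-\mu).
\]

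Next, since $d_c\ge1$ we have $d_c+1\le 2d_c$, so $\ln d_c\le\ln(d_c+1)\le\ln d_c+\ln 2$; substituting this into the display above and dividing by $J$ yields
\[
(1-\mu)-\frac{1+\ln 2}{J} \;<\; \frac{\ln d_c}{J} \;\le\; (1-\mu).
\]
Hence $\frac{\ln d_c}{J}\to 1-\mu$ as $J\to\infty$ (and in particular $d_c\to\infty$, since $1-\mu>0$). Exponentiating, $d_c^{1/J}=\exp\!\big(\tfrac{\ln d_c}{J}\big)\to e^{1-\mu}$. Finally, given $\eta>0$, the convergence $\frac{\ln d_c}{J}\to1-\mu$ supplies $J_0$ with $\big|\tfrac{\ln d_c}{J}-(1-\mu)\big|<\eta$ for all $J>J_0$; exponentiating the two resulting one-sided inequalities gives $e^{1-\mu-\eta}<d_c^{1/J}<e^{1-\mu+\eta}$ for $J>J_0$, as claimed.

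There is essentially no real obstacle here beyond bookkeeping. The one point that warrants a little care is to apply the two one-sided harmonic estimates to the \emph{correct} endpoints of the defining double inequality for $d_c$, so that the additive $O(1)$ slack (the $1$ and $\ln 2$ terms) is genuinely uniform in $J$ and therefore vanishes after dividing by $J$; and to note throughout that $\mu<1$, so $1-\mu>0$ and all the exponentiations and divisions are monotone in the intended direction.
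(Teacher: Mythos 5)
Your proof is correct. The paper itself omits the proof of this lemma (it states that ``the derivations and proof details in the rest of this subsection are omitted''), so there is nothing to compare against, but your argument --- sandwiching $\ln(d_c+1)$ between $J(1-\mu)-1$ and $J(1-\mu)$ via the standard harmonic-number estimates $\ln(n+1)\le H_n\le 1+\ln n$, concluding $\tfrac{\ln d_c}{J}\to 1-\mu$, and exponentiating --- is exactly the natural route and is complete. The only (harmless) caveat is that the step $d_c+1\le 2d_c$ presupposes $d_c\ge 1$, which holds once $J\ge \tfrac{1}{1-\mu}$; since the claim is asymptotic in $J$ this costs nothing.
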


\begin{lemma}
For each value $J\!\in\!\mathbb{Z^+}$, there exists a single value $X_c\in[0,1]$ such that $(1\!-\!x^J)^{d_c}\!>\!1\!-\!x$ if $x\!<\!X_c$, and $(1\!-\!x^J)^{d_c}\!<\!1\!-\!x$ otherwise. In addition, $X_c$ converges to $e^{-(1-\mu)}$ as $J\!\rightarrow\!\infty$.
\label{l:convergence_point}
\end{lemma}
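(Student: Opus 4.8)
The statement concerns the scalar function $\phi_J(x) = (1-x^J)^{d_c} - (1-x)$ on $[0,1]$, where $d_c$ is the integer attached to $J$ by the relation $\sum_{i=1}^{d_c} 1/i \le J(1-\mu) < \sum_{i=1}^{d_c+1} 1/i$. First I would record the obvious endpoint values: $\phi_J(0) = 1 > 0$ and $\phi_J(1) = 0$, and also $\phi_J(x) > 0$ for $x$ near $0$ (since $(1-x^J)^{d_c} \to 1$ while $1-x < 1$) and $\phi_J(x) < 0$ for $x$ near $1$ (since $x^J \to 1$ forces $(1-x^J)^{d_c}$ to vanish faster than $1-x$, because $1-x^J \le J(1-x)$ near $x=1$ and $(1-x^J)^{d_c}$ is then $O((1-x)^{d_c})$ with $d_c \ge 2$). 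So a crossing exists. The heart of the argument is \emph{uniqueness} of the crossing, i.e. that $\phi_J$ changes sign exactly once.

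For uniqueness the plan is to work with the logarithmic difference. On $(0,1)$ the inequality $(1-x^J)^{d_c} \gtrless 1-x$ is equivalent to $\psi_J(x) := d_c \ln(1-x^J) - \ln(1-x) \gtrless 0$. I would show $\psi_J$ is \emph{strictly decreasing} on $(0,1)$: its derivative is $\psi_J'(x) = -\dfrac{d_c J x^{J-1}}{1-x^J} + \dfrac{1}{1-x}$, and I want this to be negative throughout, i.e. $d_c J x^{J-1}(1-x) > 1-x^J$ for all $x\in(0,1)$. This is where the defining inequality for $d_c$ enters: since $d_c(1-\mu) \ge \frac12$-type bounds and more precisely $d_c \sim e^{-(1-\mu)J}$ grows exponentially in $J$ (this is exactly Lemma~\ref{l:dc}, which gives $d_c^{1/J}\to e^{1-\mu}$, hence $d_c J \to \infty$ very fast), the factor $d_c J$ dominates. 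The delicate point is that $x^{J-1}$ can be small when $x$ is bounded away from $1$; but there $1-x^J$ is close to $1$ while $1-x$ is bounded below, so one compares $d_c J x^{J-1}(1-x)$ against $1$ and uses that the product is minimized in a controlled interior region — I expect one has to split $(0,1)$ into a region $x \le 1 - c/J$ (handled by crude bounds plus the size of $d_c$) and $x$ near $1$ (handled by $1-x^J \le J(1-x)$, reducing the claim to $d_c J x^{J-1} \ge 1$, true for $x$ close enough to $1$ and $J$ large). Establishing $\psi_J' < 0$ uniformly is the main obstacle; it may be cleaner to prove $\psi_J$ is strictly decreasing only on the sub-interval where a sign change can occur (between the last zero of $\phi_J$ from the left and $x=1$) and argue separately that no sign change occurs to the left of that — but monotonicity of $\psi_J$ on all of $(0,1)$ is the most transparent route if the inequality $d_c J x^{J-1}(1-x) > 1-x^J$ can be pushed through.

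Granting $\psi_J$ strictly decreasing, $\psi_J$ has at most one zero on $(0,1)$, and combined with $\psi_J(0^+)=+\infty$ and $\psi_J(1^-)=-\infty$ it has exactly one; calling it $X_c$ gives $\phi_J > 0$ on $(0,X_c)$ and $\phi_J < 0$ on $(X_c,1)$, which is the first assertion. For the convergence $X_c \to e^{-(1-\mu)}$, the plan is to substitute $x = X_c$ into $\psi_J(X_c)=0$, i.e. $d_c\ln(1-X_c^J) = \ln(1-X_c)$, and take $J\to\infty$. The right side stays bounded (as long as $X_c$ stays bounded away from $1$, which a preliminary estimate should confirm, e.g. $X_c < 1 - 1/d_c$ from $\phi_J(1-1/d_c)<0$ after a short computation), so $\ln(1-X_c^J) \to 0$, forcing $X_c^J \to 0$; writing $X_c^J = (1-X_c)/d_c \cdot (1 + o(1))$ from a first-order expansion of $\ln(1-X_c^J)$, one gets $J\ln X_c = \ln(1-X_c) - \ln d_c + o(1)$. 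Dividing by $J$ and using $\tfrac1J\ln d_c \to 1-\mu$ from Lemma~\ref{l:dc} together with $\tfrac1J\ln(1-X_c)\to 0$ yields $\ln X_c \to -(1-\mu)$, hence $X_c\to e^{-(1-\mu)}$. (One should also note $X_c$ is bounded away from $0$ — e.g. $\phi_J(x)>0$ for $x$ below some fixed constant independent of $J$ — so the limit is genuinely $e^{-(1-\mu)}\in(0,1)$.) The routine parts are the endpoint expansions and the $\tfrac1J\ln d_c$ limit (already isolated as Lemma~\ref{l:dc}); the only real work is the uniform sign of $\psi_J'$.
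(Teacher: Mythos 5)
The paper itself omits the proof of this lemma (``the derivations and proof details \ldots are omitted''), so your proposal can only be judged on its own terms. The overall skeleton --- reduce to the sign of $\psi_J(x)=d_c\ln(1-x^J)-\ln(1-x)$, show it changes sign once, then extract the limit of $X_c$ from the crossing equation $d_c\ln(1-X_c^J)=\ln(1-X_c)$ together with Lemma~\ref{l:dc} --- is sound, and your treatment of the convergence $X_c\to e^{-(1-\mu)}$ is essentially correct (the slips $\phi_J(0)=1$, which should be $\phi_J(0)=0$, and $X_c^J\approx(1-X_c)/d_c$, which should be $-\ln(1-X_c)/d_c$, do not affect the limit).

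The genuine gap is in the uniqueness step, which you yourself identify as ``the only real work'': the claim that $\psi_J$ is strictly decreasing on all of $(0,1)$ is false, and the inequality $d_cJx^{J-1}(1-x)>1-x^J$ cannot be ``pushed through.'' Since $1-x^J=(1-x)\sum_{k=0}^{J-1}x^k$, that inequality is equivalent to $d_cJx^{J-1}>\sum_{k=0}^{J-1}x^k$, whose left side vanishes as $x\to0^+$ (for $J\ge2$) while the right side tends to $1$; equivalently $\psi_J'(0^+)=1>0$, so $\psi_J$ \emph{increases} near $0$. Your hedged fallback (monotonicity only past ``the last zero of $\phi_J$ from the left'') is not a proof, since locating that zero is the whole problem. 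The repair is to prove \emph{unimodality} rather than monotonicity: the equation $\psi_J'(x)=0$ is equivalent to $d_cJ=\sum_{m=0}^{J-1}x^{-m}$, and the right-hand side is strictly decreasing on $(0,1)$ from $+\infty$ to $J$; hence for $d_c>1$ there is exactly one critical point $x_0$, with $\psi_J'>0$ on $(0,x_0)$ and $\psi_J'<0$ on $(x_0,1)$. Combined with $\psi_J(0)=0$ and $\psi_J(1^-)=-\infty$, this gives exactly one interior zero $X_c$, with $\psi_J>0$ before it and $\psi_J<0$ after, which is the first assertion of the lemma. With that substitution, and the preliminary bounds keeping $X_c$ away from $0$ and $1$ (which follow from evaluating $\psi_J$ at fixed $b<e^{-(1-\mu)}<a$ and using $d_c^{1/J}\to e^{1-\mu}$), your argument for the limit goes through.
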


From Lemma~\ref{l:convergence_point}, it can be concluded that for an arbitrarily small $\gamma$, $\exists J^\star$ such that:
$X_c<e^{-(1-\mu)+\gamma} \indent \forall J>J^\star$. Therefore:
\begin{equation}
(1-x^J)^{d_c}<1-x \indent\indent \forall x>e^{-(1-\mu)+\gamma} \indent \text{and}\indent J>J^\star.
\end{equation}
This means, by Lemma~\ref{l:f_fprime}, that $\forall x>e^{-(1-\mu)+\gamma}$ and $J>J^\star$:
\begin{equation}
\frac{1}{H_c}\cdot\sum_{i=1}^{d_c}\frac{(1-x^J)^i}{i}>\frac{1-x}{1-\mu}.
\label{eq:Fx}
\end{equation}
Define function $T(x)=\frac{1}{H_c}\cdot\sum_{i=1}^{d_c}\frac{(1-x^J)^i}{i}$. By inequality~\eqref{eq:Fx}, and since $\rho(1-\delta\cdot\lambda(x))>T(x)$ by~\eqref{eq:Tx}, property (2) is satisfied.

It should be noted, that throughout the proof of the Lemmas, a stronger result can be derived on the function $T(x)$, which is:
\begin{equation*}
T(x)\rightarrow \left\{\begin{array}{ll}
                   1, & \hbox{$x<e^{-(1-\mu)}$} \\
                   -\frac{\ln(x)}{1-\mu}, & \hbox{$x\geq e^{-(1-\mu)}$}
             \end{array}\right.    \indent\text{as}\indent (J,d)\rightarrow(\infty,\infty)
\end{equation*}

\subsection{Conclusions on Data-Rate}
\noindent Overall, it is proved in the section that for a distribution $(\delta_{(\omega)})$ described by $(\delta,\mu)$ , the minimum required $\frac{K_S}{N_F}$ equals $\frac{\delta}{1-\mu}$, and therefore the effective frame-length is:

\begin{equation*}
N\cdot(1+\frac{\delta}{1-\mu}\cdot\frac{\Delta}{N}).
\end{equation*}
\section{Data-rate Comparison}\label{s:Comparison}
\textcolor{black}{
\noindent The data-rates achievable by inter-frame coding are compared to those achievable by two other schemes: 1) the state-of-the-art two-stage scheme, and 2) a simple frame-wise feedback-based rate-matching scheme.
This latter scheme is chosen for comparison because it shares with IR-HARQ a basic property: that the rate of a frame is decreased upon feedback by sending successive increments until the number of receivers which succeed to decode the frame is above a predefined threshold. For accurate comparison between inter-frame coding and IR-HARQ, many details on the involved communication scenario have to be specified. These details include the type and frequency of the involved feedback\footnote{The feedback may include, besides feedback on the decoding success or failure, regular updates on the channel-state for each receiver}, the time requirements of the communication, the transmission scheduling details, and other channel characteristics; such issues are beyond the scope of this paper.
}

The comparison is done by computing the enhancement ratio, defined as the ratio of the effective frame-length obtained in the conventional solution to that obtained in inter-frame coding. The comparison done here assumes the channel variability model developed in~\ref{s:Model}. An underlying assumption made in this section is that the distribution $(\delta_{(\omega)})$ does not change with the different schemes compared. Therefore, the reported results can be fully attributed to the peculiar coding features of these schemes.

It is assumed throughout this section that all the receivers, in the broadcast communication, have the same channel-characterizing distribution $(\delta_{(\omega)})$. This \emph{single-distribution} assumption is motivated by two considerations.
First, the assumption keeps the comparison setup within the range of the results obtained in the previous section: the previous analysis does not deal with the performance of an inter-frame code under multiple  $(\delta,\mu)$ couples. Second, the \emph{single-distribution} assumption  can be viewed as a simplification of the \emph{worst-distribution} assumption explained as follows: among the channel-characterizing probability distributions corresponding to the different receivers, there exist a single distribution which has the highest values of both $\delta$ and $\mu$.
The design of the inter-frame code and the two-stage scheme are determined solely by this distribution. It is, therefore, the distribution considered in evaluating the effective frame-length resulting from inter-frame coding, as well as, from the two-stage scheme.

In this regard, in comparing the inter-frame coding to the two-stage scheme, the number of receivers involved in the communication scenario are overlooked. The reason is that the coding-performance for each of the two compared approaches, per receiver, does not depend on the number of receivers. In contrast, the comparison between inter-frame coding and the frame-wise feedback-based rate-matching scheme is done with respect to the number of receivers involved.

\color{black}

\subsection{Frame-wise Feedback-based Rate-matching}\label{s:IR-HARQ}
\noindent A simple frame-wise feedback-based rate-matching scheme is assumed here where for each frame $f$, the rate-$R_H$ $N$-bit portion of the frame is initially transmitted. A retransmission is initiated as long as: 1) feedback from at least one receiver reports a decoding failure and 2) the number of retransmissions corresponding to $f$ are less than some upper bound $n^\star$.

The $i\text{th}$ retransmission, $1\leq i\leq n^{\star}$, consists of increment $\Delta(f,i)$.
For this subsection, the following terminology is defined: $n_{(f)}$ is the number of transmitted increments for frame $f$, $\mathcal{R}$ is the set of receivers,
and $\kappa_{(f;i)}$, $1\leq i\leq |R|$, is the $\kappa$-value of frame $f$ for the $i\text{th}$ receiver. It can be deduced from the description of the frame-wise feedback-based scheme that: 1) $n_{(f)}=\min(\max_{1\leq i\leq|\mathcal{R}|}\kappa_{(f;i)},n^\star)$, 2) and the frame-error-rate (FER) for each receiver equals $\sum_{\omega>n^\star}\delta_{(\omega)}=\delta\cdot\mu^{n^\star}$.
For a target FER of $10^{-a}$, $n^\star$ is set to $\lceil\frac{-a-\log_{10}(\delta)}{\log_{10}(\mu)}\rceil$.

For $1 \leq n<n^\star$, the probability that $n_{(f)}=n$ is equal to
$(1-\mu^{n}\cdot\delta)^{\mathcal{|R|}}-(1-\mu^{n-1}\cdot\delta)^{\mathcal{|R|}}$. Therefore, the expected value of $n_{(f)}$
is:
\small
\begin{equation}
E(n_{(f)})=n^\star-\sum_{i=0}^{n^\star-1}\left(1-\mu^i\cdot\delta\right)^{\mathcal{|R|}}=\sum_{i=0}^{n^\star-1}\left(1-\left(1-\mu^i\cdot\delta\right)^{\mathcal{|R|}}\right).
\end{equation}\normalsize
Expectedly, as $\mathcal{|R|}\rightarrow\infty$, $E(n_{(f)})\rightarrow n^\star$. The enhancement ratio is equal to:
\small
\begin{equation*}
\frac{1+E(n_{(f)})\cdot\frac{\Delta}{N}}{1+\frac{\delta}{1-\mu}\cdot\frac{\Delta}{N}}.
\end{equation*}
\normalsize
Fig.~\ref{f:Feedback-IF} shows the variation of the enhancement ratio versus $\mu$, for: $\frac{\Delta}{N}=0.1$, $\delta=0.5$, $\mathcal{|R|}\in\{10,30,100,\infty\}$, and target FER values of $10^{-1}$, $10^{-2}$, and $10^{-3}$. The enhancement ratios
grow with $\mathcal{|R|}$. For $\mathcal{|R|}=\infty$, the enhancement ratio values are considerably high: for example, for a target FER of $10^{-2}$, the enhancement ratio exceeds $2$ for $\mu\rightarrow0.8$, and exceeds $3$ for $\mu\rightarrow 0.9$. Yet, the enhancement brought by inter-frame coding over the frame-wise feedback-based scheme is significant, even when the number of receivers is small. For example, it can be checked that the enhancement ratio for a target FER of $10^{-2}$ and $\mathcal{|R|}=10$  exceeds the corresponding enhancement ratio, for a target FER of $10^{-1}$ and $\mathcal{|R|}\rightarrow\infty$, for all values of $\mu$.

\begin{figure*}[t]
\centering
\includegraphics[scale=0.34]{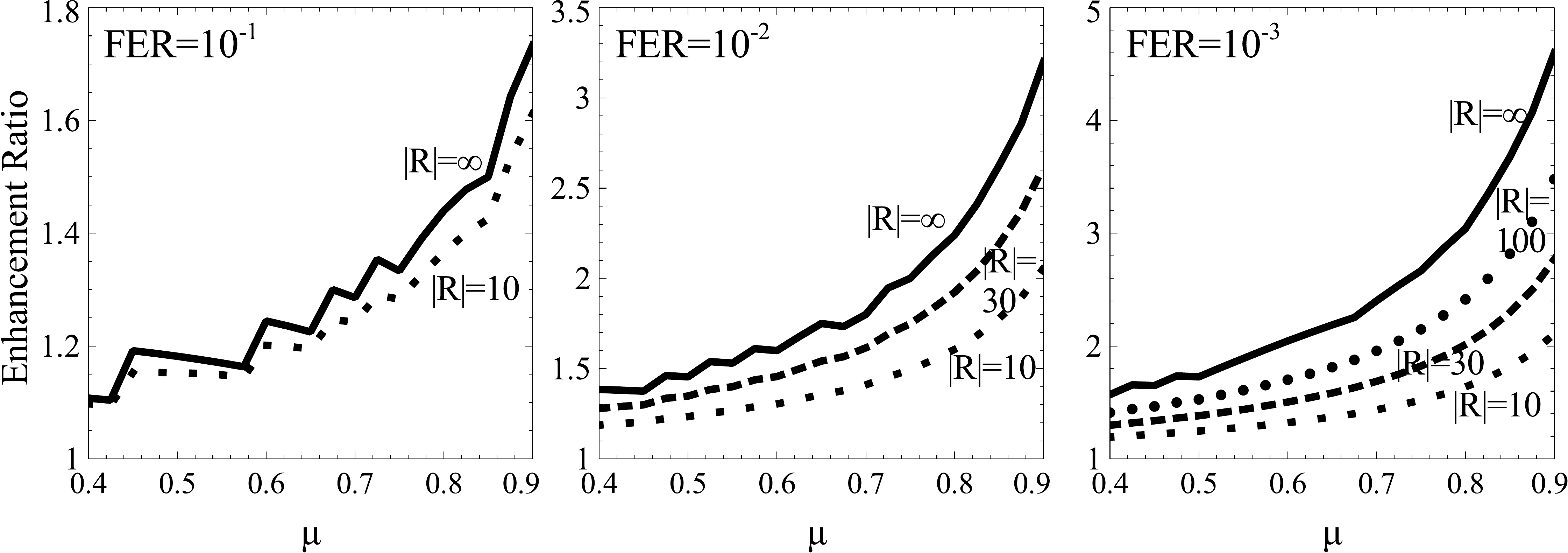}
\caption{Enhancement ratio, for different values of $\mathcal{|R|}$ and target FER. The fluctuations in the enhancement ratio curves in the first figure is due to the fact that $n^\star$ changes in steps of $1$ with $\mu$. }

\label{f:Feedback-IF}
\end{figure*}

\color{black}
\subsection{Two-stage Scheme}\label{s:two-stage}

\noindent In the comparison done here, the distribution $(\delta_{(\omega)})$ is assumed to be known to the sender.
Besides, for simplicity, no further feedback from the receiver to the transmitter is assumed.

The two-stage scheme follows the setup described in section~\ref{s:Setup}. On the transmitter side, the $N_F\cdot K$ information bits are encoded into $N_T\cdot K$ bits, using an erasure code of rate $R_E=\frac{N_F}{N_T}$. The resulting $(N_T\cdot K)$-bit sequence is then partitioned into $N_T$ $K$-bit blocks that are intra-frame encoded into $N_T$ rate-$R$ frames that are transmitted over the channel. On the receiver side, intra-frame decoding is applied on each received frame: if decoding fails, the frame is dropped and considered erased; else, if decoding succeeds, the $K$ bits corresponding to the frame are recovered and collected. The collected bits, resulting from the intra-frame decoding of the $N_T$ frames, are then forwarded into erasure decoding. The erasure code is capacity-achieving. Therefore, if slightly more than $N_F$ frames, out of a total of $N_T$ frames, are successfully intra-frame decoded, enough bits ($>N_F\cdot K$) are collected and erasure decoding succeeds.

The combination of the intra-frame code-rate and the erasure code-rate that results in the lowest effective frame-length should be found. This problem is considered in\cite{Rate_Det_1,Rate_Det_2,Rate_Det_3} under various assumptions on the channel model and communication scenarios. It is considered here again, under the channel variability model developed in subsection~\ref{s:Model}. For the asymptotic case, that is when $N_F\rightarrow \infty$, the optimal rate of the erasure-code $R_E$ is $1-\Gamma$, $\Gamma$ being the intra-frame decoding-failure rate. If the intra-frame code-rate $R$ is set to $\frac{K}{N+i\cdot\Delta}$, $i\in\mathbb{Z^{+}}$, that is the frame-length is set to $N+i\cdot\Delta$, $\Gamma$ is equal to $\sum_{\omega=i+1}^{\infty}\delta_{(\omega)}=\delta\cdot\mu^{i}$. It is reasonably assumed that for the general case of $i\in\mathbb{Q^+}$, $\Gamma$ is also equal to $\delta\cdot\mu^i$.
Therefore, the effective frame-length of the two-stage scheme when $(R,R_E)=(\frac{K}{N+i.\cdot\Delta},1-\delta\cdot\mu^i)$ is denoted here by $L_{(i)}$ and is equal to:
\begin{equation}
L_{(i)}=\frac{N+i\cdot\Delta}{R_E}=\frac{N+i\cdot\Delta}{1-\delta\cdot\mu^i}.
\end{equation}
The effective-frame length of the two-stage scheme is then the minimum, over $i\in\mathbb{Q^+}$, of $L_{(i)}$. It should be noted that $i$ is restricted to positive values to rule out the unrealistic case of having $R=\frac{K}{N+i\cdot\Delta}>1$ if $i$ is negative given that the exact value $\frac{K}{N}$ is not set here: $N$ is here assumed to be the minimum frame-length for both inter-frame coding and the two-stage scheme.

The enhancement ratio is then equal to:
\begin{equation}
\min_{i\in\mathbb{Q^+}} \frac{L_{(i)}}{(N+\frac{\delta}{1-\mu}\cdot\Delta)}= \min_{i\in\mathbb{Q^+}}\frac{1+i\cdot\frac{\Delta}{N}}{(1-\delta\cdot\mu^i)\cdot(1+\frac{\delta}{1-\mu}\cdot\frac{\Delta}{N})}.
\label{e:ratio}
\end{equation}

\begin{lemma}
For the two-stage-scheme, the enhancement ratio is upper bounded by $\frac{1}{1-e^{-1}}=1.582$.
\label{l:bound}
\end{lemma}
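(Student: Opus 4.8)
The plan is to bound the enhancement ratio in \eqref{e:ratio} by a quantity that is independent of $\Delta/N$, and then maximize that quantity over the free parameters. The first observation is that the denominator factor $(1+\frac{\delta}{1-\mu}\cdot\frac{\Delta}{N})$ is exactly the value of the numerator $1+i\cdot\frac{\Delta}{N}$ at the (possibly non-integer) point $i=\frac{\delta}{1-\mu}$; so if I could choose $i=\frac{\delta}{1-\mu}$ in the minimization, the ratio would reduce to $\frac{1}{1-\delta\cdot\mu^{\delta/(1-\mu)}}$. The real minimization is over all $i\in\mathbb{Q^+}$, so the true enhancement ratio is no larger than the value obtained at this particular choice; hence
\[
\min_{i\in\mathbb{Q^+}}\frac{1+i\cdot\frac{\Delta}{N}}{(1-\delta\cdot\mu^i)\cdot(1+\frac{\delta}{1-\mu}\cdot\frac{\Delta}{N})}\;\le\;\frac{1}{1-\delta\cdot\mu^{\,\delta/(1-\mu)}}.
\]
This is the key step: it eliminates $\Delta/N$ entirely and leaves a bound depending only on $(\delta,\mu)$.

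Next I would bound $\delta\cdot\mu^{\delta/(1-\mu)}$ from below, uniformly in $(\delta,\mu)\in(0,1)^2$, since the bound above is increasing in that quantity. Writing $\mu^{\delta/(1-\mu)}=\exp\!\big(\frac{\delta\ln\mu}{1-\mu}\big)$ and using the elementary inequality $\ln\mu\ge 1-\frac1\mu$, equivalently $\frac{\ln\mu}{1-\mu}\ge-\frac1\mu$ for $\mu\in(0,1)$, gives $\mu^{\delta/(1-\mu)}\ge e^{-\delta/\mu}$. That still has a $\mu$ in it; a cleaner route is to note $\frac{-\ln\mu}{1-\mu}\le\frac1\mu$ is not tight enough, so instead I would keep the exact exponent and minimize $h(\delta,\mu):=\delta\exp\!\big(\frac{\delta\ln\mu}{1-\mu}\big)$ directly. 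For fixed $\mu$, $h$ is of the form $\delta e^{-c\delta}$ with $c=\frac{-\ln\mu}{1-\mu}>0$, which on $\delta\in(0,1)$ — wait, $h$ is maximized at $\delta=1/c$ and is monotone away from there — so I actually want its infimum, which is approached at the boundary $\delta\to0$ where $h\to0$. That would make the bound vacuous, so the supremum of the enhancement ratio is not attained by pushing $\delta\to1$ naïvely; rather I must maximize $\frac{1}{1-h(\delta,\mu)}$, i.e. maximize $h$.

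So the real optimization is: maximize $h(\delta,\mu)=\delta\,\mu^{\delta/(1-\mu)}$ over $(0,1)^2$. With $c=c(\mu)=\frac{-\ln\mu}{1-\mu}$, for fixed $\mu$ we have $\sup_{\delta\in(0,1)}\delta e^{-c\delta}$; the unconstrained maximum of $\delta e^{-c\delta}$ is $\frac{1}{ce}$ at $\delta=1/c$, and this lies in $(0,1)$ precisely when $c>1$, i.e. when $-\ln\mu>1-\mu$, which holds for all $\mu\in(0,1)$. Hence for every $\mu$, $\sup_\delta h=\frac{1}{e\,c(\mu)}=\frac{1-\mu}{e(-\ln\mu)}=\frac{\mu-1}{e\ln\mu}$. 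Finally maximize $\phi(\mu):=\frac{1-\mu}{-\ln\mu}$ over $\mu\in(0,1)$: as $\mu\to1^-$, $\phi(\mu)\to1$ by L'Hôpital, and one checks $\phi$ is increasing on $(0,1)$ (its derivative has constant sign — this is the routine calculus step), so $\sup_\mu\phi(\mu)=1$. Therefore $\sup h=\frac1e$, giving the enhancement ratio $\le\frac{1}{1-e^{-1}}=1.582$.

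The main obstacle — and the place to be careful — is the direction of the bound at the first step: one must check that plugging in the specific value $i=\frac{\delta}{1-\mu}$ genuinely gives an \emph{upper} bound on the minimum (it does, since the minimum over all $i$ is at most the value at any one $i$), and separately that the subsequent chain maximizes rather than minimizes $h$, since $x\mapsto\frac{1}{1-x}$ is increasing so we want the largest possible $h$. The monotonicity of $\phi(\mu)=\frac{1-\mu}{-\ln\mu}$ and the fact that the interior critical point $\delta=1/c(\mu)$ always falls in $(0,1)$ are the two small lemmas that make the argument go through; both reduce to the single elementary inequality $-\ln\mu>1-\mu$ for $\mu\in(0,1)$, equivalently $\mu e^{1-\mu}<1$, which should be stated once and reused.
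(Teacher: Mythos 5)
Your proposal is correct and follows essentially the same route as the paper's proof: substitute $i=\frac{\delta}{1-\mu}$ to cancel the $\frac{\Delta}{N}$ dependence and reduce the bound to $\bigl(1-\delta\cdot\mu^{\delta/(1-\mu)}\bigr)^{-1}$, then show $\delta\cdot\mu^{\delta/(1-\mu)}<e^{-1}$ by locating the critical point of $x\mapsto x\,\mu^{x/(1-\mu)}$ and invoking $-\ln\mu>1-\mu$. The only difference is presentational: the paper evaluates the maximizer once and bounds its value by $e^{-1}$ in a single line, whereas you additionally verify that the critical point lies in $(0,1)$ and take the supremum over $\mu$ separately, which is a slightly more careful but equivalent argument.
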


\begin{proof}
\begin{eqnarray*}
&&\min_{i\in\mathbb{Q^+}}\frac{1+i\cdot\frac{\Delta}{N}}{(1-\delta\cdot\mu^i)\cdot(1+\frac{\delta}{1-\mu}\cdot\frac{\Delta}{N})}\\
&\leq&\left(\frac{1+i\cdot\frac{\Delta}{N}}{(1-\delta\cdot\mu^i)\cdot(1+\frac{\delta}{1-\mu}\cdot\frac{\Delta}{N})}\right)_{i=\frac{\delta}{1-\mu}}\\
&=&(1-\delta\cdot\mu^{\frac{\delta}{1-\mu}})^{-1}.
\end{eqnarray*}

Consider the function $f(x)=x\cdot\mu^\frac{x}{1-\mu}$ over the range $[0,1]$. By simple calculus, it can be checked that it attains its maximum at $x^*=-\frac{1-\mu}{\ln(u)}$:
\begin{eqnarray*}
f(x^*)&=&-\frac{1-\mu}{\ln(\mu)}\mu^{-\frac{1}{\ln(u)}}
=-\frac{1-\mu}{\ln(\mu)}\cdot e^{-1}< e^{-1}.
\end{eqnarray*}
Therefore, $(1-\delta\cdot\mu^{\frac{\delta}{1-\mu}})^{-1}=(1-f(\delta))^{-1}<(1-e^{-1})^{-1}$.
\end{proof}

The curves plotted in Fig.~\ref{f:CodingOverhead} show $\frac{L_{(i)}}{(N+\frac{\delta}{1-\mu}\cdot\Delta)}$ versus $i$. The curves correspond to all possible combinations of $\delta\in\{0,3,0.5,0.8\}$ and $\mu\in\{0.5,0.7,0.85\}$, for $\frac{\Delta}{N}=0.1$. For all the considered $(\delta,\mu)$ pairs, except one ($(\delta,\mu)=(0.3,0.85)$), we have: the minimum of $\frac{L_{(i)}}{(N+\frac{\delta}{1-\mu}\cdot\Delta)}$ occurs for strictly positive values of $i$. This implies that restricting $i$ to positive values has no impact on the computed enhancement ratios, particularly when these ratios are relatively high.
In addition, the enhancement ratio $\min_{i\in\mathbb{Q^+}} \frac{L_{(i)}}{(N+\frac{\delta}{1-\mu}\cdot\Delta)}$, in general, grows as the channel statistical parameters ``worsens", that is as $\delta$ and $\mu$ increase. For example, while the enhancement ratio is $\sim 1.2$
for $(\delta,\mu)=(0.3,0.5)$ and $\sim1.25$ for $(\delta,\mu)=(0.5,0.5)$, it is  $\sim 1.35$ for $(\delta,\mu)=(0.5,0.7)$, $\sim 1.4$ for $(\delta,\mu)=(0.5,0.85)$, and $\sim 1.5$ for $(\delta,\mu)=(0.8,0.85)$.

\begin{figure*}[t]
\centering
\includegraphics[scale=0.34]{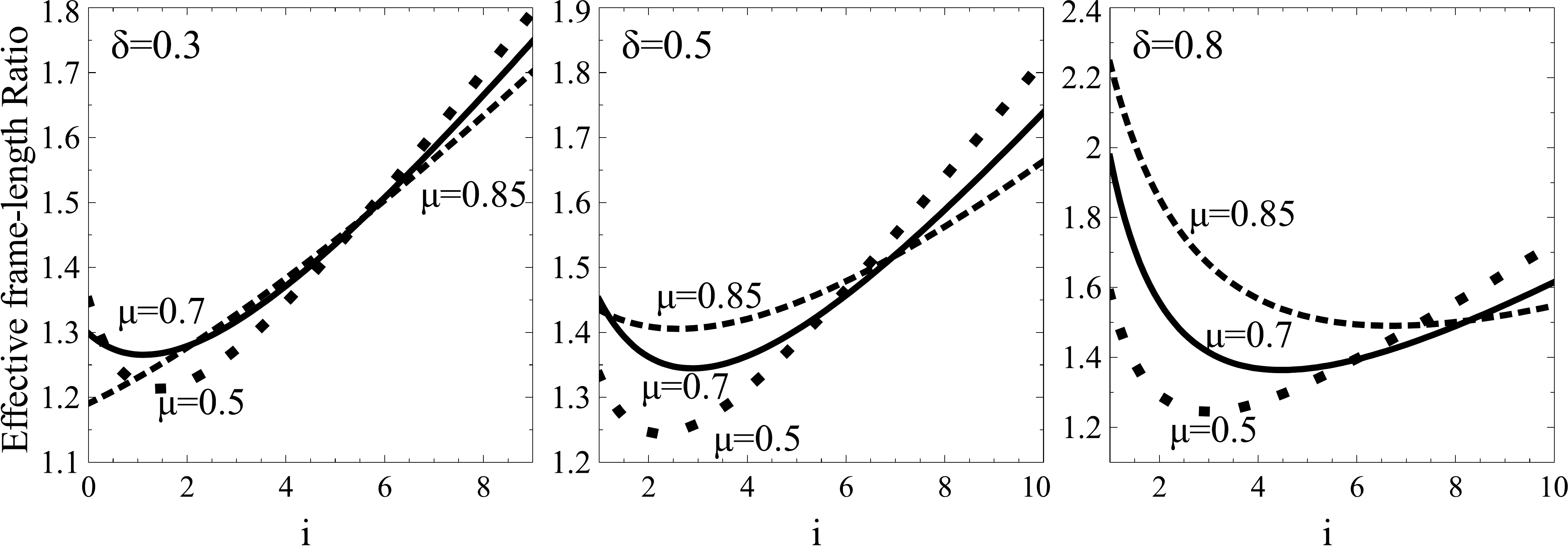}
\caption[\emph{Effective frame-length} ratio versus intra-frame length, for different values of $\delta$ and $\mu$]{\emph{Effective frame-length} ratio versus $i$, for different values of $\delta$ and $\mu$.}
\label{f:CodingOverhead}
\end{figure*}

The variation of the enhancement-ratio with the value of $\mu$ is illustrated in Fig.~\ref{f:CodingPerformance_Delta_1}, assuming $\frac{\Delta}{N}=0.1$. Three curves are plotted. One curve corresponds to a fixed value of $\delta=0.99$; the other two curves assume $\delta$ varies with $\mu$ such that $\delta=\mu^3$ and $\delta=\mu^6$ respectively. The assumption on how the two parameters of the distribution $(\delta_{(\omega)})$, $\delta$ and $\mu$, are related is justified as follows: the intra-frame decoding-failure rate is $\delta$ when the frame-length is $N$, and $\delta\cdot\mu^{\frac{N}{\Delta}}=\delta\cdot\mu^{10}$ when the frame-length is $2\cdot N$. If the same $N$-bit frame is transmitted twice under independent channel-state instances and each frame is independently intra-frame decoded, the probability of recovering the frame is $1-\delta^2$ for total number of sent bits of $2\cdot N$.  This probability is typically less than the probability of intra-frame decoding-success of a $(2\cdot N)$-bit frame, formed by concatenating $10$ subframes to the $N$-bit frame; therefore, $1-\delta^2<1-\delta\cdot \mu^{10}$ implying $\delta>\mu^{10}$. Therefore, the assumed relations of $\delta=\mu^3$, and $\delta=\mu^6$, consist merely example relations that satisfy the condition $\delta>\mu^{10}$ over the full considered range of $\mu$.
Three observations can be made from the figure. First, the derived upper bound of $\frac{1}{1-e^{-1}}$ on the enhancement ratio is relatively tight: the maximum value of the plotted enhancement ratios is $1.55$ and it occurs when $(\delta,\mu)=(0.99,0.94)$. Besides, even when $\delta$ varies with $\mu$, such that $\delta=\mu^3$, the enhancement ratio exceeds $1.5$ for $0.89\leq\mu\leq0.95$. Second, the variation of the enhancement ratio with $\mu$, for the three curves, follows the same trend. The enhancement ratio increases as $\mu$ increases to a value near $0.94$, at which the maximum is attained, and then it decreases as $\mu$ goes from $0.95$ to $1$. The rate of the increase in the plotted enhancement ratio is such that the enhancement ratio becomes relatively high as $\mu$ exceeds $0.75$ or $0.8$. Third, for the same value of $\mu$, higher enhancement ratios are obtained for higher $\delta$ values. Overall, the observed trend that the enhancement ratio increases with increasing $\delta$ and $\mu$ can be qualitatively justified as follows: as $\delta$ and $\mu$ increase, the effective-frame length increases in both schemes. That is, the value of the effective frame-length minus $N$ increases, relative to $N$. Consequently, the reduction in this value achieved by inter-frame coding compared to the two-stage scheme becomes more significant in the computation of the enhancement ratio.

\begin{figure}[t]
\centering
\includegraphics[scale=0.37]{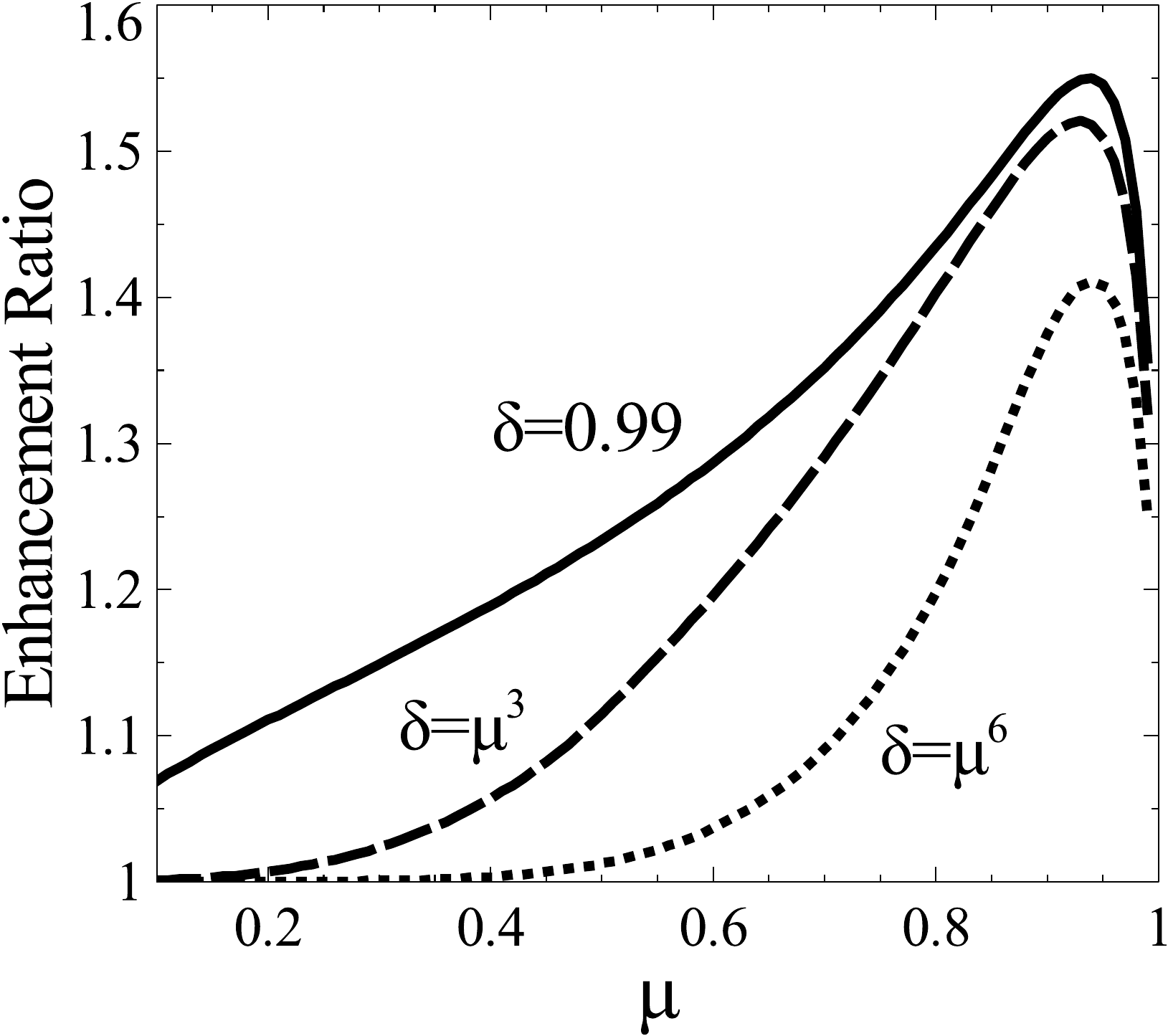}
\caption[Enhancement ratio compared to the two-stage scheme, versus $\mu$]{Enhancement ratio curves versus $\mu$, for different values of $\delta$.}
\label{f:CodingPerformance_Delta_1}
\end{figure}

\section{Conclusions}\label{s:Conclusions}
\noindent In this paper, an inter-frame coding approach to the problem of varying channel conditions has been developed. The proposed approach involves a rate-matching process, in which each frame is assigned its appropriate encoding-rate at the receiver side. This makes inter-frame coding suitable for broadcast communication, as well as for any communication scenario in which feedback from the receiver(s) to the transmitter is costly or undesirable. In terms of complexity, the computational overhead of inter-frame decoding includes operations that are similar in type and scheduling to those employed in the relatively-simple iterative erasure decoding. Through channel modeling, graph-based simplification and subsequent analysis, it has been shown in this paper that the matching process can be optimal under appropriate randomized construction of bi-partite graphs. Furthermore, the potential of inter-frame coding to enhance the achieved data-rates in broadcast communication has been quantitatively illustrated. This research direction can be extended in several directions. One direction is on the implementation aspects of the proposed scheme. Another is on the design of efficient moderate-size inter-frame codes (e.g. $N_F~\sim 10^2-10^3$). A third is on the performance of inter-frame coding when $(\delta_{(\omega)})$ is not described by a geometric progression, or when different receivers have different $(\delta_{(\omega)})$ distributions.

\vspace{-0.3in}

\singlespace\small
\bibliographystyle{IEEEtran}
\bibliography{IEEEabrv,References}

% Generated by IEEEtran.bst, version: 1.13 (2008/09/30)
\begin{thebibliography}{10}
\providecommand{\url}[1]{#1}
\csname url@samestyle\endcsname
\providecommand{\newblock}{\relax}
\providecommand{\bibinfo}[2]{#2}
\providecommand{\BIBentrySTDinterwordspacing}{\spaceskip=0pt\relax}
\providecommand{\BIBentryALTinterwordstretchfactor}{4}
\providecommand{\BIBentryALTinterwordspacing}{\spaceskip=\fontdimen2\font plus
\BIBentryALTinterwordstretchfactor\fontdimen3\font minus
  \fontdimen4\font\relax}
\providecommand{\BIBforeignlanguage}[2]{{%
\expandafter\ifx\csname l@#1\endcsname\relax
\typeout{** WARNING: IEEEtran.bst: No hyphenation pattern has been}%
\typeout{** loaded for the language `#1'. Using the pattern for}%
\typeout{** the default language instead.}%
\else
\language=\csname l@#1\endcsname
\fi
#2}}
\providecommand{\BIBdecl}{\relax}
\BIBdecl

\bibitem{FundamentalsOfWirelessCommunication}
D.~Tse and P.~Viswanath, Eds., \emph{Fundementals of Wireless
  Communication}.\hskip 1em plus 0.5em minus 0.4em\relax Cambridge University
  Press, 2005.

\bibitem{LTE_Book}
S.~Sesia, I.~Toufic, and M.~Baker, Eds., \emph{LTE: The UMTS Long Term
  Evolution, From Theory to Practice}.\hskip 1em plus 0.5em minus 0.4em\relax
  John-Wiley, 2011.

\bibitem{Chase}
D.~Chase, ``Code combining -- {A} maximum-likelihood decoding approach for
  combining an arbitrary number of noisy packets,'' \emph{{IEEE} Trans.
  Commun.}, vol.~33, no.~5, pp. 385--393, May 1985.

\bibitem{IR}
S.~Lin and P.~Yu, ``A hybrid {ARQ} scheme with parity retransmission for error
  control of satellite channels,'' \emph{{IEEE} Trans. Commun.}, vol.~30,
  no.~7, pp. 1701--1719, Jul. 1982.

\bibitem{802.16}
\BIBentryALTinterwordspacing
``{IEEE} standard for local and metropolitan area networks - {Part} 16: Air
  interface for broadband wireless access systems,'' \emph{IEEE 802.16}.
  [Online]. Available: \url{http://www.ieee802.org/16}
\BIBentrySTDinterwordspacing

\bibitem{LTE}
\emph{3GPP TS 36.212 V8.8.0 3rd Generation Partnership Project, Technical
  Specification Group Radio Access Network}.

\bibitem{Luby_LT}
M.~Luby, ``{LT}-codes,'' in \emph{IEEE Symp. Foundations of Comp. Science
  {(FOCS)}}, Nov. 2002, pp. 271--280.

\bibitem{Shokrollahi_Raptor}
A.~Shokrollahi, ``Raptor codes,'' \emph{{IEEE} Trans. Inf. Theory}, vol.~52,
  no.~6, pp. 2551--2567, Jun. 2006.

\bibitem{RaptorQ}
M.~Luby, A.~Shokrollahi, M.~Watson, T.~Stockhammer, and L.~Minder, ``{RaptorQ}
  forward error correction scheme for object delivery,'' in \emph{RFC 6330,
  Internet Engineering Task Force}, Aug. 2011.

\bibitem{Inactivation}
A.~Shokrollahi, S.~Lassen, and R.~Karp, ``Systems and processes for decoding
  chain reaction codes through inactivation,'' Feb. 2005, {US} Patent
  6,856,263.

\bibitem{MBMS}
\emph{3GPP TS 26.346, Technical Specification Group Services and System
  Aspects; Multimedia Broadcast/Multicast Service (MBMS); Protocols and Codecs,
  3GPP Technical Specification, Rev. V7.4.1, June 2007}.

\bibitem{DVB}
\emph{Digital Video Broadcasting ({DVB}); {IP} Datacast over {DVB-H}: Content
  Delivery Protocols, {ETSI} Technical Specification}, Std. Rev. V1.2.1, 2006.

\bibitem{BDR}
L.~Lu, M.~Xiao, L.~Rasmussen, M.~Skoglund, G.~Wu, and S.~Li, ``Efficient
  wireless broadcasting based on systematic binary deterministic rateless
  codes,'' in \emph{Proc. IEEE Int. Conf. Commun. (ICC)}, May 2010, pp. 1--6.

\bibitem{NetworkCoding}
------, ``Efficient network coding for wireless broadcasting,'' in \emph{Proc.
  IEEE Wireless Commun. and Netw. Conf. (WCNC)}, Apr. 2010, pp. 1--6.

\bibitem{Embedded}
T.~Mladenov, S.~Nooshabadi, and K.~Kim, ``Implementation and evaluation of
  {Raptor} codes on embedded systems,'' \emph{{IEEE} Trans. Computers},
  vol.~60, no.~12, pp. 1678--1691, Dec. 2011.

\bibitem{Solution1}
K.~Wang, Z.~Chen, and H.~Liu, ``A novel broadcasting scheme for {LT}-codes in
  wireless broadcasting systems,'' \emph{{IEEE} Commun. Lett.}, vol.~17, no.~5,
  pp. 972--975, May 2013.

\bibitem{Solution2}
Q.~Zhang, D.~Lin, and G.~Wu, ``A novel decoding method for rateless codes
  basing on packet combining,'' in \emph{Proc. IEEE Int. Conf. Wireless
  Commun., Netw., Mobile Comp. (WiCOM)}, Sep. 2012, pp. 1--4.

\bibitem{Solution3}
D.~Lin, M.~Xiao, Y.~Xiao, and S.~Li, ``Efficient packet combining based on
  packet-level coding,'' \emph{{IET} Electron. Lett.}, vol.~47, pp. 444--445,
  Mar. 2011.

\bibitem{ProductElias}
P.~Elias, ``Error-free coding,'' \emph{{IEEE} Trans. Inf. Theory}, vol.~4,
  no.~4, pp. 29--37, Sep. 1954.

\bibitem{ProductInterleaved}
M.~Baldi, G.~Cancellieri, and F.~Chiaraluce, ``Interleaved product {LDPC}
  codes,'' \emph{{IEEE} Trans. Commun.}, vol.~60, no.~4, pp. 85--91, Apr. 2012.

\bibitem{ProductIrregular}
M.~Alipour, O.~Etesami, G.~Maatouk, and A.~Shokrollahi, ``Irregular product
  codes,'' in \emph{Proc. IEEE Inf. Theory Workshop (ITW)}, Sep. 2012, pp.
  197--201.

\bibitem{RC_Turbo2}
A.~S. Barbulescu and S.~S. Pietrobon, ``Rate compatible turbo codes,''
  \emph{{IET} Electron. Lett.}, vol.~31, no.~7, pp. 535--536, Mar. 1995.

\bibitem{RC_Protograph}
T.~V. Nguyen, A.~Nosratinia, and D.~Divsalar, ``The design of rate-compatible
  protograph {LDPC} codes,'' \emph{{IEEE} Trans. Commun.}, vol.~60, no.~10, pp.
  2841--2850, Oct. 2012.

\bibitem{HWRaptor}
H.~Zeineddine, M.~Mansour, and R.~Puri, ``Construction and hardware-efficient
  decoding of raptor codes,'' \emph{{IEEE} Trans. Signal Process.}, vol.~59,
  no.~6, pp. 2943--2960, Jun. 2011.

\bibitem{rate_compatible_mansour}
H.~Zeineddine, L.~Jalloul, and M.~M. Mansour, ``Hardware-oriented construction
  of a family of rate-compatible {Raptor} codes,'' \emph{{IEEE} Commun. Lett.},
  vol.~18, no.~7, pp. 1131--1134, Jul. 2014.

\bibitem{Shortened_HARQ}
T.~Okamura, ``A {Hybrid ARQ} scheme based on shortened low-density parity-check
  codes,'' in \emph{Proc. IEEE Wireless Commun. and Netw. Conf. (WCNC)}, Las
  Vegas, NV, Apr. 2008, pp. 82--87.

\bibitem{HARQ_Gain}
J.~Cheng, ``Coding performance of hybrid {ARQ} schemes,'' \emph{{IEEE} Trans.
  Commun.}, vol.~56, no.~6, pp. 1017--1029, Jun. 2006.

\bibitem{ErasureCodes}
M.~Luby, M.~Mitzenmacher, A.~Shokrollahi, and D.~Spielman, ``Efficient erasure
  correcting codes,'' \emph{{IEEE} Trans. Inf. Theory}, vol.~47, no.~2, pp.
  569--584, Feb. 2001.

\bibitem{Practical_Loss_Resilient}
M.~Luby, M.~Mitzenmacher, M.~A. Shokrollahi, D.~Spielman, and V.~Stemann,
  ``Practical loss-resilient codes,'' in \emph{Proc. 29th Symp. on Theory of
  Computing}, 1997, pp. 150--159.

\bibitem{LTE_36.212}
\BIBentryALTinterwordspacing
\emph{Evolved Universal Terrestrial Radio Access {(E-UTRA)}; Multiplexing and
  channel coding}, 3GPP Std. TS 36.212. [Online]. Available:
  \url{http://www.3gpp.org}
\BIBentrySTDinterwordspacing

\bibitem{AnalysisAndOr}
M.~Luby, M.~Mitzenmacher, and M.~A. Shokrollahi, ``Analysis of random processes
  via and-or tree evaluation,'' in \emph{Proc. 9th Annu. ACM-SIAM Symp.
  Discrete Algorithms}, 1998, pp. 364--373.

\bibitem{Rate_Det_1}
M.~Luby, M.~Watson, T.~Gasiba, and T.~Stockhammer, ``Mobile data broadcasting
  over {MBMS} tradeoffs in forward error correction,'' in \emph{ACM Proc. Int.
  Conf. Mobile Ubiquitous Multimedia (MUM)}, Stanford, CA, 2006.

\bibitem{Rate_Det_2}
C.~Berger, S.~Zhou, Y.~Wen, P.~Willett, and K.~Pattipati, ``Optimizing joint
  erasure- and error-correction coding for wireless packet transmissions,''
  \emph{{IEEE} Trans. Wireless Commun.}, vol.~7, no.~11, pp. 4586--4595, Nov.
  2008.

\bibitem{Rate_Det_3}
T.~Courtade and R.~Wesel, ``Optimal allocation of redundancy between
  packet-level erasure coding and physical-layer channel coding in fading
  channels,'' \emph{{IEEE} Trans. Commun.}, vol.~59, no.~8, pp. 2101--2109,
  Aug. 2011.

\end{thebibliography}

\begin{IEEEbiography}[{\includegraphics[width=1in,height=1.15in,clip]{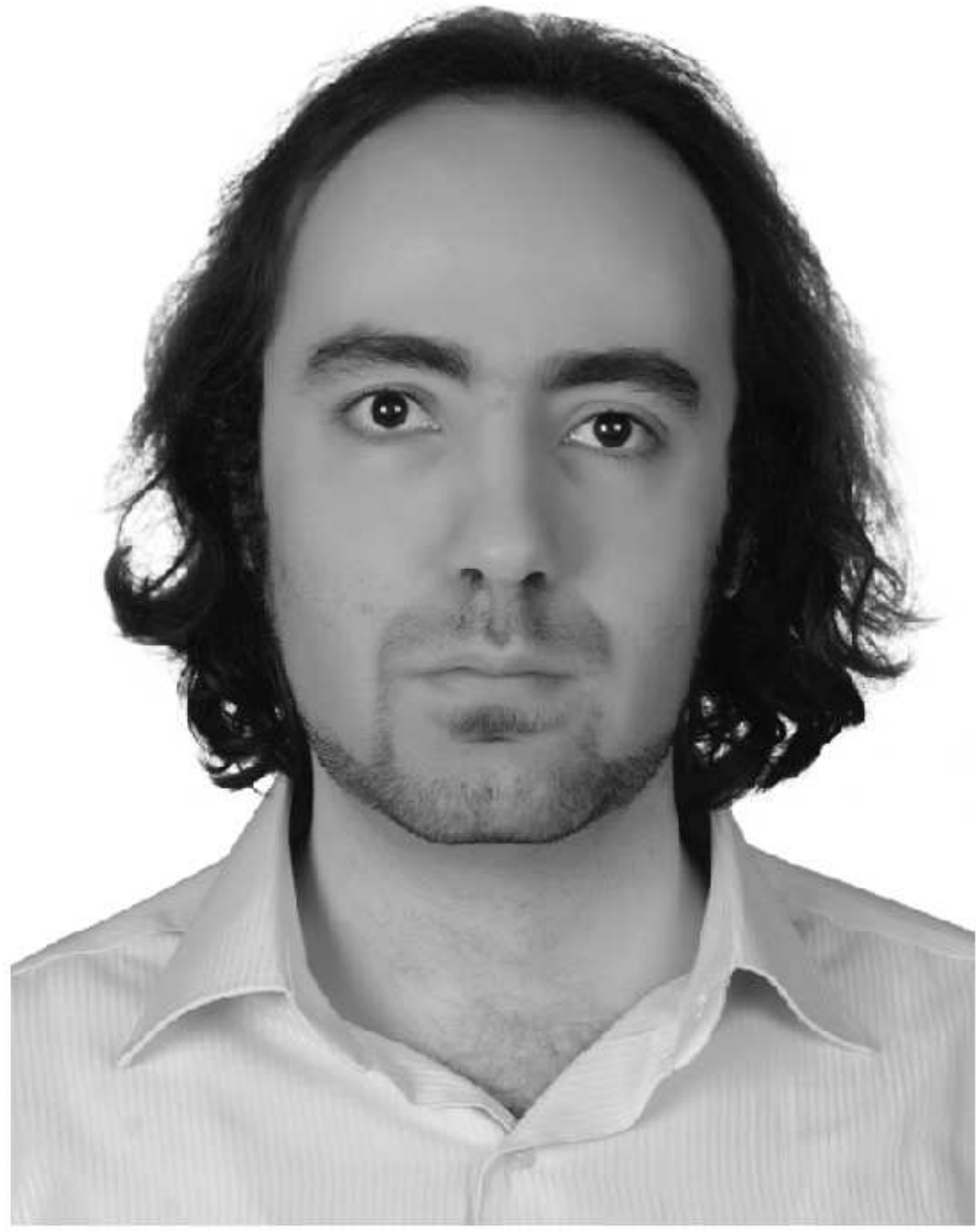}}]{Hady Zeineddine}
received the B.E. degree from the American
University of Beirut (AUB), Lebanon,
the M.S. degree from the University of Texas at Austin, and the PH.D. degree from the American University of Beirut, in 2006, 2009, and 2015 respectively. His research interests lie in the field of coding, including theory and applications, and in the design of algorithms and architectures for efficient IC
implementation of communication and digital signal processing
applications.

\end{IEEEbiography}

\begin{IEEEbiography}[{\includegraphics[width=1in,height=1.25in,clip,keepaspectratio]{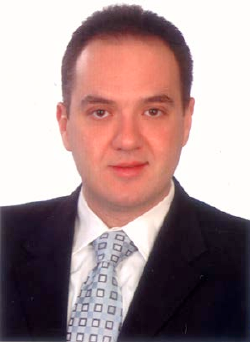}}]{Mohammad M. Mansour}
Mohammad M. Mansour (S’97–M’03–SM’08) received the B.E. (Hons.) and
the M.E. degrees in computer and communications engineering from
the American University of Beirut (AUB), Beirut, Lebanon, in 1996
and 1998, respectively, and the M.S. degree in mathematics and the
Ph.D. degree in electrical engineering from the University of
Illinois at Urbana–Champaign (UIUC), Champaign, IL, USA, in 2002
and 2003, respectively.

He was a Visiting Researcher at Broadcom, Sunnyvale, CA, USA, from
2012 to 2014, where he worked on the physical layer SoC
architecture and algorithm development for LTE-Advanced. He was on
research leave with Qualcomm Flarion Technologies in Bridgewater,
NJ, USA, from 2006 to 2008, where he worked on modem design and
implementation for 3GPP-LTE, 3GPP2-UMB, and peer-to-peer wireless
networking physical layer SoC architecture and algorithm
development. He was a Research Assistant at the Coordinated Science
Laboratory (CSL), UIUC, from 1998 to 2003. He worked at National
Semiconductor Corporation, San Francisco, CA, with the Wireless
Research group in 2000. He was a Research Assistant with the
Department of Electrical and Computer Engineering, AUB, in 1997,
and a Teaching Assistant in 1996. He joined as a faculty member
with the Department of Electrical and Computer Engineering, AUB, in
2003, where he is currently a Professor. His research interests are
in the area of energy-efficient and high-performance VLSI circuits,
architectures, algorithms, and systems for computing,
communications, and signal processing.

Prof. Mansour is a member of the Design and Implementation of
Signal Processing Systems (DISPS) Technical Committee Advisory
Board of the IEEE Signal Processing Society. He served as a member
of the DISPS Technical Committee from 2006 to 2013. He served as an
Associate Editor for IEEE TRANSACTIONS ON CIRCUITS AND SYSTEMS II
(TCAS-II) from 2008 to 2013. He currently serves as an Associate
Editor of the IEEE TRANSACTIONS ON VLSI SYSTEMS since 2011, and an
Associate Editor of the IEEE SIGNAL PROCESSING LETTERS since 2012.
He served as the Technical Co-Chair of the IEEE Workshop on Signal
Processing Systems in 2011, and as a member of the Technical
Program Committee of various international conferences and
workshops. He was the recipient of the PHI Kappa PHI Honor Society
Award twice in 2000 and 2001, and the recipient of the Hewlett
Foundation Fellowship Award in 2006. He has six issued U.S.
patents.

\end{IEEEbiography}

% that's all folks

\end{document}